\declaretheorem[name=Theorem,numberwithin=section]{theorem}
\declaretheorem[name=Lemma,numberlike=theorem]{lemma}
\declaretheorem[name=Proposition,numberlike=theorem]{proposition}
\declaretheorem[name=Claim,numberlike=theorem]{claim}
\declaretheorem[name=Definition,style=definition,numberlike=theorem]{definition}
\declaretheorem[name=Observation,numberlike=theorem]{observation}
\global\long\def\succ{\textsc{succ}}
\global\long\def\TwoBit{\textsc{TwoBit}}
\global\long\def\R{\mathbb{R}}
\global\long\def\N{\mathbb{N}}
\global\long\def\disc{{\cal D}}
\newcommand{\Hess}{\mathbf{H}}
\newcommand{\Det}{\mathbf{Det}}
\newcommand{\dists}{{\Delta}}
\newcommand{\inequ}{{\mathrm{ineq}}}
\newcommand{\Com}{\mathrm{Com}}
\newcommand{\Out}{\mathrm{Out}}
\newcommand{\argmax}{\mathrm{argmax}}
\date{}
\title{Improved Protocols and Hardness Results for the Two-Player Cryptogenography Problem}
\author{Benjamin Doerr\thanks{LIX, \'Ecole Polytechnique, Palaiseau, France} \and Marvin K\"unnemann\thanks{Max Planck Institute for Informatics, Saarbr\"ucken, Germany}}
\begin{document}

\maketitle

\begin{abstract}
  The cryptogenography problem, introduced by Brody, Jakobsen, Scheder, and Winkler (ITCS 2014), is to collaboratively leak a piece of information known to only one member of a group (i)~without revealing who was the origin of this information and (ii)~without any private communication, neither during the process nor before. Despite several deep structural results, even the smallest case of leaking one bit of information present at one of two players is not well understood. Brody et al.\ gave a $2$-round protocol enabling the two players to succeed with probability $1/3$ and showed the hardness result that no protocol can give a success probability of more than~$3/8$.
  
In this work, we show that neither bound is tight. Our new hardness result, obtained by a different application of the concavity method used also in the previous work, states that a success probability better than $0.3672$ is not possible. Using both theoretical and numerical approaches, we improve the lower bound to $0.3384$, that is, give a protocol leading to this success probability. To ease the design of new protocols, we prove an equivalent formulation of the cryptogenography problem as solitaire vector splitting game. Via an automated game tree search, we find good strategies  for this game. We then translate the splits that occurred in this strategy into inequalities relating position values and use an LP solver to find an optimal solution for these inequalities. This gives slightly better game values, but more importantly, it gives a more compact representation of the protocol and a way to easily verify the claimed quality of the protocol. 
  
  Unfortunately, already the smallest protocol we found that beats the previous $1/3$ success probability takes up $16$~rounds of communication. The protocol leading to the bound of $0.3384$ even in a compact representation consists of $18248$ game states. These numbers suggest that finding good protocols for the cryptogenography problem as well as understanding their structure are harder than what the simple problem formulation suggests. 
\end{abstract}

\section{Introduction}

Motivated by a number of recent influential cases of whistle-blowing, Brody, Jakobsen, Scheder, and Winkler~\cite{BrodyJSW14} proposed the following \emph{cryptogenography problem} as model for anonymous information disclosure in public networks. We have $k$~players (potential information leakers). A random one of them holds a secret, namely a random bit. All other players only know that they are not the secret holder. Now without any non-public communication, the players aim at both making the secret public and hiding the identity of the secret holder. 
More precisely, we are looking for a (fully public) communication protocol in which the players  as only form of communication broadcast bits, which may depend on public information (including all previous communication), private knowledge (with respect to the secret), and a private source of randomness. After this phase of communication, the protocol outputs a single bit depending solely on all data sent in the communication phase. The complete protocol (regulating the communication and the output function) and all communication is public, and is monitored by an eavesdropper who aims at identifying the secret owner. We say that a run of the protocol is a success for the players, if the protocol output is the secret bit and the eavesdropper fails to identify the secret owner; otherwise it is a success for the eavesdropper. Since everything is public, optimal strategies for the eavesdropper are easy to find (see below). We shall therefore always assume that the eavesdropper plays an optimal strategy. 
The players' success probability (for a given protocol) then is the probability (taken over the random decisions of the players and the random initial secret distribution) that simultaneously (i)~the protocol outputs the true secret and (ii)~an optimal eavesdropper does not blame the secret holder.

It is immediately clear that some positive (players') success probability is easy to obtain. A protocol without any communication and outputting a random bit achieves a success probability of $\tfrac 12 - \tfrac 1{2k}$ (the eavesdropper has no strictly better alternative than guessing a random player). Surprisingly, Brody et al.\ could show that the players, despite the complete absence of private communication, can do better. For two players, they present a protocol having a success probability of $\tfrac 13$ (instead of the trivial $\tfrac 14$). For $k$ sufficiently large, they present a protocol with success probability $0.5644$. They also show two hardness results, namely that a success probability of more than $\tfrac 34$ cannot be obtained, regardless of the number of players, and that $\tfrac 38$ is an upper bound for the two-player case. While all these results are easy to state, they build on deep analyses of the cryptogenography problem, in particular, on clever reformulations of the problem in terms of certain convex combinations of secret distributions (protocol design) and functions that are concave on a certain infinite set of two-dimensional subspaces (``allowed planes'') of the set of secret distributions (hardness results).

The starting point for our work is the incomplete understanding of the two-player case. While the gap between upper and lower bound of $\tfrac 38 - \tfrac 13 \approx 0.04167$ is small, our impression is that the current-best protocol achieving the $\tfrac 13$ success probability in two rounds together with the abstract hardness result do not give us much understanding of the structure of the cryptogenography problem. We therefore imagine that a better understanding of this smallest-possible problem of leaking one bit from two players, ideally by determining an optimal protocol (that is, matching a hardness result), could greatly improve the situation. 

\emph{Our Results.} We shall be partially successful in achieving these goals. On the positive side, we find protocols with strictly larger success probability than $\tfrac 13$ (namely $0.3384$) and we prove a stricter hardness result of $0.3672$. Our new protocols look very different from the $2$-round protocol given by Brody et al., in particular, they use infinite protocol trees (but have an expected finite number of communication rounds). These findings motivate and give new starting points for further research on the cryptogenography problem. 

On the not so positive side, our work on better protocols indicates that good cryptogenographic protocols can be very complicated. The simplest protocol we found that beats the $\tfrac 13$ barrier already has a protocol tree of depth $16$, that is, the two players need to communicate for $16$ rounds in the worst case. While we still manage to give a human-readable description and performance proof for this protocol, it is not surprising that previous works not incorporating a computer-assisted search did not find such a protocol. Our best protocol, giving a success probability of $0.3384$, already uses 18248 non-equivalent states.

\emph{Technical contributions.} To find the improved protocols, we use a number of theoretical and experimental tools. We first reformulate the cryptogenography problem as a solitaire vector splitting game over vectors in $\R_{\ge 0}^{2 \times k}$. Both for human researchers and for automated protocol searches, this reformulation seems to be easier to work with than the previous reformulation via convex combinations of distributions lying in a common allowed plane~\cite{BrodyJSW14}. It also proved to be beneficial for improving upon the hardness result.

Restrictions of the vector splitting game to a finite subset of $\R_{\ge 0}^{2 \times k}$, e.g., $\{0, \ldots, T\}^{2 \times k}$, can easily be solved via dynamic programming, giving (due to the restriction possibly sub-optimal) cryptogenographic protocols. Unfortunately, for $k=2$ even discretizations as fine as $T=40$ are not sufficient to find protocols beating the $1/3$ barrier and memory usage quickly becomes a bottleneck issue. However, exploiting the simple fact that the game values are homogeneous (that is, multiplying a game position by a non-negative scalar changes the game value by this factor), we can (partially) simulate a much finer discretization in a coarse one. This \emph{extended dynamic programming approach} easily gives cryptogenographic success probabilities larger than $1/3$. Reading off the corresponding protocols, due to the reuse of the same position in different contexts, needs more care, but in the end gives without greater difficulties also the improved protocols. 

When a cryptogenographic protocol reuses a state a second time (with a non-trivial split in between), then there is no reason to re-iterate this part of the protocol whenever this position occurs. Such a protocol allows infinite paths, while still needing only an expected finite number of rounds. 
Since the extended dynamic programming approach in finite time cannot find such protocols, we use a linear programming based post-processing stage. We translate each splitting operation used in the extended dynamic programming search into an inequality relating game values. By exporting these into an LP-solver, we do not only obtain better game values (possibly corresponding to cryptogenographic protocols with infinite paths, for which we would get a compact representation by making the cycles explicit), but also a way to easily certify these values using an optimality check for a linear program instead of having to trust the ad-hoc dynamic programming implementation.

\textit{Related work.} Despite a visible interest of the research community in the cryptogenography problem, the only relevant follow-up work is Jakobsen's paper~\cite{Jakobsen14}, which analyses the cryptogenography problem for the case that several of the players know the secret. This allows to leak a much larger amount of information as made precise in~\cite{Jakobsen14}. Due to the asymptotic nature of these results, unfortunately, they do not give new insights in the $2$-player case. Other work on anonymous broadcasting typically assumes bounded computational power of the adversary (see, e.g.,~\cite{JakobsenO16}).

In~\cite{BrodyJSW14}, the cryptogenography problem was reformulated to the problem of finding the point-wise minimal function $f$ on the set of secret distributions that is point-wise not smaller than some given function $g$ and that is concave on an infinite set of $2$-dimensional planes. Such restricted notions of concavity (or, equivalently, convexity) seem to be less understood. We found work by Matou\v sek~\cite{Matousek01} for a similar convexity problem, however, with only a finite number of one-dimensional directions in which convexity is required. We do not see how to extend these results to our needs.

\section{Finding Better Cryptogenography Protocols}
\label{sec:cryptogenoIntro}

This section is devoted to the design of stronger cryptogenographic protocols. In particular, we demonstrate that a success probability of more than $1/3$ can be achieved. We start by making the cryptogenography problem precise (Section~\ref{sec:problem} and Section~\ref{sec:intuition}) and introduce an equivalent formulation as solitaire vector splitting game  (Section~\ref{sec:vectorSplitting}). We illustrate both formulations using the best known protocol for the 2-player case (Section~\ref{sec:best-so-far}). In Section~\ref{sec:usefulFacts}, we state basic properties that simplify the analysis of protocols and aid our automated search for better protocols, which is detailed in Section~\ref{sec:search}. In Section~\ref{sec:simpleBetterProtocol}, we give a simple, human-readable proof that 1/3 is not the optimal success probability by analyzing a protocol with success probability $\frac{449}{1334}\approx0.3341$. We describe how to post-optimize and certify the results obtained by the automated search using linear programming in Section~\ref{sec:verification} and summarize our findings (in particular, the best lower bound we have found) in Section~\ref{sec:main}.  

\subsection{The Cryptogenography Problem}
\label{sec:problem}

Let us fix an arbitrary number $k$ of players called $1, \ldots, k$ for simplicity. We write $[k]:=\{1,\ldots, k\}$ for the set of players. We assume that a random one of the them, the ``secret owner'' $J \in [k]$, initially has a secret, namely a random bit $X \in \{0,1\}$. The task of the players is, using public communication only, to make this random bit public without revealing the identity of the secret owner. More precisely, we assume that the players, before looking at the secret distribution, (publicly) decide on a communication protocol $\pi$. This is again public, that is, all bits sent are broadcast to all players, and they may depend only on previous communication, the private knowledge of the sender (whether he is the secret owner or not, and if so, the secret), and private random numbers of the sender. At the end of the communication phase, the protocol specifies an output bit $Y$ (depending on all communication). 

The aspect of not disclosing the identity of the secret owner is modeled by an adversary, who knows the protocol (because it was discussed in public) and who gets to see all communication (and consequently also knows the protocol output $Y$). The adversary, based on all this data, blames one player $K$. The players win this game if the protocol outputs the true secret (that is, $Y = X$) and the adversary does not blame the secret owner (that is, $K \neq J$), otherwise the adversary wins. It is easy to see (see Section~\ref{sec:intuition}), what the best strategy for the adversary is (given the protocol and the communication), so the interesting part of the cryptogenography problem is finding strategies that maximize the probability that the players win assuming that the adversary plays optimal. We call this the (players') success probability of the protocol. 

While the game starts with a uniform secret distribution, it will be useful to regard arbitrary secret distributions. 
In general, a \emph{secret distribution} is a distribution~$D$ over $\{0,1\} \times [k]$, where $D_{ij}$ is the probability that player $j \in [k]$ is the secret owner and the secret is $i \in \{0,1\}$. Modulo a trivial isomorphism, $D$ is just a vector in $\R_{\ge 0}^{2 \times k}$ with $\|D\|_1 = 1$. We denote by $\dists = \dists_k$ the set of all these distributions (this was denoted by $\dists(\{0,1\} \times [k])$ in~\cite{BrodyJSW14}).

Brody et al.~\cite{BrodyJSW14} observe that any cryptogenographic protocol can be viewed as successive rounds of one-bit communication, where in each step some (a priori) secret distribution probabilistically leads to one of two follow-up (a posteriori) distributions (depending on the bit transmitted) such that the a priori distribution is a convex combination of these and a certain proportionality condition is fulfilled (all three distributions lie in the same ``allowed plane''). Conversely, whenever the initial distribution can be written as such a convex combination of certain distributions, then there is a round of a cryptogenographic protocol leading to these two distributions (with certain probabilities). Consequently, the problem of finding a good cryptogenographic protocol is equivalent to iteratively rewriting the initial equidistribution as certain convex combinations of other secret distributions in such a way that the success probability, which can be expressed in terms of this rewriting tree, is large. Instead of directly working with this formulation, we propose a slightly different reformulation in Section~\ref{sec:vectorSplitting}. To prepare readers that are unfamiliar with the work of Brody et al.~\cite{BrodyJSW14}, we give a high-level introduction in the following section. 

\subsection{The Convex Combination Formulation}
\label{sec:intuition}

For readers' convenience, we give a high-level description of the convex combination formulation of Brody et al. For proofs and a more formal treatment, we refer the reader to~\cite{BrodyJSW14}. 

\emph{Optimal strategy of the adversary. }
Recall that $X$ denotes the secret bit and $J$ the identity of the secret owner.
Fix a protocol $\pi$ of the players, which for every state of the protocol execution, i.e., every possible history of communication, determines (1)~which player's turn it is to communicate (or whether communication has ended) and (2)~probability distributions over the next message this player sends (two for the case that this player is the secret owner, i.e., one for each value of the secret bit, and one for the other case). Both~(1) and~(2) may depend on all previous communication and, if the active player is the secret owner, also on the value of the secret bit. Additionally, $\pi$ fixes a protocol output function $\Out$ that given the transcript~$\tau$ of all communication returns the players' guess $\Out(\tau)$ on the secret bit.  Without loss of generality, we may assume that the protocol proceeds in rounds, where in each round a message consisting of a single bit is sent. 

Let $\Com(\pi)$ denote the transcript of all communication of the protocol $\pi$. Note that this is a random variable, since we assume a random player to be the owner of a random bit as secret. 
It is not difficult to see what the optimal strategy of the adversary is, given the knowledge of the protocol $\pi$. He may assume that the players' guess is correct, i.e., $\Out(\Com(\pi))=X$, as otherwise the players have already lost and therefore his guess is irrelevant. After the protocol execution has finished with a transcript $\tau$, the adversary maximizes his winning probability by blaming the player $\argmax_{j \in [k]} \Pr[J=j \mid \Com(\pi)=\tau, X=\Out(\tau)]$ (breaking ties arbitrarily), i.e., the player who is most likely to own the secret given the communication described by $\tau$. 

From this reasoning, it becomes clear that the decisive information in the game is, for any partial transcript $\tau'$,  the distribution $D=((D_{0,1},D_{1,1}),\dots,(D_{0,k},D_{1,k}))$, where \[D_{x,j} = \Pr[J=j,X=x \mid \Com(\pi) \mbox{ starts with } \tau']\] is the probability an outside observer (knowing only public information, i.e., all previous communication) assigns to the event $(J=j,X=x)$. As a simple consequence, assume that some fixed transcript $\tau'$ transforms the initial uniform distribution into the distribution $D$ and no further communication is allowed. Then the optimal choice for the protocol output is the guess 
\begin{equation*}
\argmax_{x\in \{0,1\}} \left( \sum_{j\in [k]} D_{x,j} - \max_{j\in [k]} D_{x,j} \right),
\end{equation*}
as for any fixed choice $\Out(\tau')=x$, the adversary blames the player $\bar{\jmath} := \argmax_{j\in [k]} D_{x,j}$ and hence the players win in all cases with $X=x$ except when $J=\bar{\jmath}$. 
We call the strategy applied here the \emph{zero-bit strategy} (since no further communication is done).

\emph{Convex combination formulation. }
Brody et al. prove that it is not only sufficient, but in fact equivalent to represent the cryptogenography game using only the distributions $D$ described above and how the protocol affects these distributions. More precisely, one can model the game starting from any initial distribution $D$ on $\{0,1\} \times \{1,\dots,k\}$. Then the first bit sent by some player $j$ \emph{splits} $D$ into the distributions $D^0$ (for the case that the 0-bit is sent) and $D^1$ (for the case that the 1-bit is sent), i.e., $D^i$ is the distribution an outside observer assigns to $(J,X)$ after bit $i$ has been sent. By this abstraction, one can recursively consider the distributions $D^0$ and $D^1$ (i.e., their optimal protocols and success probabilities).

To determine the properties of possible splits in a protocol, let $p$ be the probability that player~$j$ transmits a $0$-bit. By a simple calculation, we have that $D= p D^0 + (1-p) D^1$ (cf. \cite[Lemma 4.1]{BrodyJSW14}). Additionally, since a player may only use the information whether or not he has the secret bit (and if so, the value of the secret bit), player~$j$ may never leak new information about whether another player $j' \in [k]\setminus\{j\}$ is more likely to have secret 0 or 1 (i.e., the ratio of $D_{0,j'}$ and $D_{1,j'}$ is maintained in the resulting distributions $D^0$ and $D^1$) or whether player $j' \ne j$ is more likely to have the secret than another player $j'' \in [k]\setminus\{j,j'\}$. This transfers to a proportionality condition that $(D^0)_{|\{0,1\} \times ([k]\setminus\{j\})} = \lambda D_{|\{0,1\} \times ([k]\setminus\{j\})}$ for some $\lambda \in [0,1]$. In fact, any split of $D$ into $D^0$ and $D^1$ satisfying these conditions can be realized by a cryptogenographic protocol. Thus, the cryptogenography game is equivalent to, starting from the uniform distribution, recursively apply splits satisfying these conditions (i.e., \emph{allowed splits}), using the zero-bit strategy at the leaves, in such a way that the resulting success probability is maximized. We argue that this view is equivalent to our vector splitting formulation in Lemma~\ref{lem:lim-n-bit}.

\subsection{The Solitaire Vector Splitting Game}
\label{sec:vectorSplitting}

Instead of directly using the ``convex combination'' formulation of Brody et al., we propose a slightly different reformulation as \emph{solitaire vector splitting game}. This formulation seems to ease finding good cryptogenographic protocols (lower bounds for the success probability), both for human researchers and via automated search (Section~\ref{sec:simpleBetterProtocol}). The main advantage of our formulation is that it takes as positions all $2k$-dimensional vectors with non-negative entries, whereas the cryptogenographic protocols are only defined on distributions over $\{0,1\} \times [k]$. In this way, we avoid arguing about probabilities and convex combinations and instead simply split a vector (resembling a secret distribution) into a sum of two other vectors. Furthermore, a simple monotonicity property (Proposition~\ref{prop:mon}) eases the analyses.
Still, there is an easy translation between the two formulations, so that we can re-use whatever results were found in~\cite{BrodyJSW14}.

\begin{definition}\label{def:allowed}
Let $D\in\R_{\ge0}^{2 \times k}$. We say that $(D_{0},D_{1})$ is a \emph{$j$-allowed split of} $D$ if $D=D_{0}+D_{1}$
and $D_{0}$ (and thus also $D_{1}$) is proportional to $D$ on $\{0,1\} \times ([k]\setminus\{j\})$, that is, there is a $\lambda \in [0,1]$ such that $(D_0)_{|\{0,1\} \times ([k]\setminus\{j\})} = \lambda D_{|\{0,1\} \times ([k]\setminus\{j\})}$.
We call $(D_{0},D_{1})$ an \emph{allowed split} of $D$ if it is a
$j$-allowed split of $D$ for some $j\in[k]$.
\end{definition}

The objective of the vector splitting game is to recursively apply allowed splits to a given vector $D\in\R_{\ge0}^{2 \times k}$ with the target of maximizing the sum of the \[p(D') := \max\limits_{x\in\{0,1\}}\bigg(\sum\limits_{j\in[k]}D'_{x,j}-\max\limits_{j\in[k]}D'_{x,j}\bigg)\] values of the resulting vectors (note that when $D'$ is a distribution, then $p(D')$ is the $0$-bit success probability of $D'$ as argued in Section~\ref{sec:intuition}). More precisely, an $n$-round play of the vector splitting game is described by a binary tree of height at most $n$, where the nodes are labeled with \emph{game positions} in $\R_{\ge 0}^{2 \times k}$. The root is labeled with the initial position $D$. For each non-leaf $v$, the labels of the two children form an allowed split of the label of $v$. The payoff of such a play is $\sum_{D'} p(D')$, where $D'$ runs over all leaves of the game tree. The aim is to maximize the payoff. Right from this definition, it is clear that the maximum payoff achievable in an $n$-round game started in position $D$, the \emph{value} of this game, is $\succ_n(D)$ as defined below.

\begin{definition}\label{def:succn}
For all  $n\in\N$  and for all $D \in \R_{\ge0}^{2\times k}$, we recursively define
\begin{enumerate}
\item[(i)] $\succ_{0}(D) := \max\limits_{x\in\{0,1\}}\bigg(\sum\limits_{j\in[k]}D_{x,j}-\max\limits_{j\in[k]}D_{x,j}\bigg)$;
\item[(ii)] $\succ_{n}(D) := \max\limits_{(D_{0},D_{1})} \bigg( \succ_{n-1}(D_{0})+\succ_{n-1}(D_{1})\bigg)$, if $n\ge1$. Here the maximum is taken over all allowed splits $(D_{0},D_{1})$ of $D$.
\end{enumerate}
\end{definition}

For an example of an admissible game, we refer to Figure~\ref{fig:twobittree} in Section~\ref{sec:best-so-far}.

It is easy to see that the game values are non-decreasing in the number of rounds, but bounded. The limiting value is thus well-defined.

\begin{lemma}\label{lem:limsup}
Let $D\in\R_{\ge0}^{2 \times k}$ and $n \in \N$. Then $\succ_n(D) \le \|D\|_1$ and $\succ_{n+1}(D) \ge \succ_n(D)$. Consequently, $\succ(D) := \lim_{n \to \infty} \succ_n(D)$ is well-defined and is equal to $\sup_{n \in \N} \succ_n(D)$.
\end{lemma}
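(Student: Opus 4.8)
The plan is to establish the two inequalities separately and then invoke the monotone convergence of bounded increasing sequences.

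For the bound $\succ_n(D) \le \|D\|_1$, I would argue by induction on $n$. For the base case $n = 0$, observe that $\succ_0(D) = \max_{x} (\sum_j D_{x,j} - \max_j D_{x,j}) \le \max_x \sum_j D_{x,j} \le \sum_{x,j} D_{x,j} = \|D\|_1$, where the last step uses that all entries of $D$ are non-negative. For the inductive step, let $(D_0, D_1)$ be any allowed split of $D$. Since $D = D_0 + D_1$ with both summands having non-negative entries, we have $\|D_0\|_1 + \|D_1\|_1 = \|D\|_1$. By the induction hypothesis, $\succ_{n-1}(D_0) + \succ_{n-1}(D_1) \le \|D_0\|_1 + \|D_1\|_1 = \|D\|_1$, and taking the maximum over all allowed splits preserves this bound.

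For monotonicity $\succ_{n+1}(D) \ge \succ_n(D)$, the idea is that one can always refuse to use an extra round. Formally, I would first note (or prove by a trivial induction) that $\succ_m(D') \ge \succ_0(D') = p(D')$ for all $m$, and more specifically that $\succ_m$ is itself non-decreasing in $m$ once we have the one-step comparison; but the cleanest route is a direct induction on $n$. The base case $\succ_1(D) \ge \succ_0(D)$ follows by choosing the trivial split $(D, \mathbf{0})$: this is a $j$-allowed split for any $j$ with $\lambda = 1$, since $D = D + \mathbf{0}$ and $D$ is proportional to itself, and it yields $\succ_0(D) + \succ_0(\mathbf{0}) = \succ_0(D) + 0$. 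For the inductive step, given an optimal allowed split $(D_0, D_1)$ witnessing $\succ_n(D)$, the induction hypothesis gives $\succ_n(D_i) \ge \succ_{n-1}(D_i)$ for $i \in \{0,1\}$, so $\succ_{n+1}(D) \ge \succ_n(D_0) + \succ_n(D_1) \ge \succ_{n-1}(D_0) + \succ_{n-1}(D_1) = \succ_n(D)$.

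Finally, having shown that $(\succ_n(D))_{n \in \N}$ is non-decreasing and bounded above by $\|D\|_1$, the monotone convergence theorem for real sequences gives that the limit exists and equals the supremum, so $\succ(D) := \lim_{n\to\infty} \succ_n(D) = \sup_{n \in \N} \succ_n(D)$ is well-defined. I do not anticipate any real obstacle here; the only point requiring a moment's care is verifying that the trivial split $(D, \mathbf{0})$ is indeed an allowed split under Definition~\ref{def:allowed}, which it is via $\lambda = 1$.
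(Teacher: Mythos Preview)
Your proof is correct and follows essentially the same route as the paper: induction for the bound $\succ_n(D)\le\|D\|_1$, the trivial split $(D,\mathbf{0})$ for monotonicity, and monotone convergence to conclude. The only minor difference is that the paper dispenses with the induction in the monotonicity step: since $(D,\mathbf{0})$ is an allowed split and $\succ_n(\mathbf{0})=0$ (by the already-established bound $\succ_n(\mathbf{0})\le\|\mathbf{0}\|_1=0$ together with non-negativity), one gets $\succ_{n+1}(D)\ge\succ_n(D)+\succ_n(\mathbf{0})=\succ_n(D)$ directly for every $n$, so your inductive step is not actually needed.
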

\begin{proof}
  The previous definition and an elementary induction shows $\succ_n(D) \le \|D\|_1$. Since $(D,0)$ is an allowed split of $D$ and $\succ_n(0) = 0$ by the previous observation, we have $\succ_{n+1}(D) \ge \succ_n(D) + \succ_n(0) = \succ_n(D)$. 
\end{proof}

\begin{proposition}[scalability]
\label{prop:scaling}
Let $D \in \R_{\ge 0}^{2 \times k}$ and $\lambda \ge 0$. Then $\succ_n(\lambda D) = \lambda \, \succ_n(D)$ for all $n \in \N$. Consequently, $\succ(\lambda D) = \lambda \, \succ(D)$. 
\end{proposition}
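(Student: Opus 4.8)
The statement is a homogeneity property, so the natural route is induction on $n$, with the base case being a direct computation and the inductive step resting on a single structural observation: the relation ``$(D_0,D_1)$ is an allowed split of $D$'' is invariant under scaling by a positive factor. I would first dispose of the trivial case $\lambda = 0$ (noting $\succ_n(0)=0$, which follows since $0 \le \succ_0 \le \succ_n$ and $\succ_n(0)\le \|0\|_1 = 0$ by Lemma~\ref{lem:limsup}), and then assume $\lambda > 0$ throughout.

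\textbf{Base case.} For $n=0$, $\succ_0(D) = \max_{x\in\{0,1\}}\big(\sum_{j\in[k]}D_{x,j} - \max_{j\in[k]}D_{x,j}\big)$. Since every entry of $\lambda D$ equals $\lambda$ times the corresponding entry of $D$ and $\lambda \ge 0$, both the inner sum and the inner maximum scale by $\lambda$, hence so does their difference, and finally so does the outer maximum. Thus $\succ_0(\lambda D) = \lambda\,\succ_0(D)$.

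\textbf{Inductive step.} The key lemma I would record is: for $\lambda > 0$, the map $(D_0,D_1)\mapsto(\lambda D_0,\lambda D_1)$ is a bijection from the set of allowed splits of $D$ onto the set of allowed splits of $\lambda D$. Indeed, $D = D_0 + D_1$ holds iff $\lambda D = \lambda D_0 + \lambda D_1$, and the proportionality requirement in Definition~\ref{def:allowed} — that $(D_0)_{|\{0,1\}\times([k]\setminus\{j\})}$ equals some $\mu\in[0,1]$ times $D_{|\{0,1\}\times([k]\setminus\{j\})}$ — is unchanged when both $D_0$ and $D$ are multiplied by $\lambda$ (the same $\mu$ works), and the map is clearly invertible via division by $\lambda$. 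Granting this, and the induction hypothesis $\succ_{n-1}(\lambda D') = \lambda\,\succ_{n-1}(D')$ for all $D'\in\R_{\ge 0}^{2\times k}$, I would compute
\[
\succ_n(\lambda D) = \max_{(D_0,D_1)}\big(\succ_{n-1}(\lambda D_0) + \succ_{n-1}(\lambda D_1)\big) = \max_{(D_0,D_1)}\lambda\big(\succ_{n-1}(D_0)+\succ_{n-1}(D_1)\big) = \lambda\,\succ_n(D),
\]
where the maxima range over allowed splits of $D$, the first equality using the bijection above, the second the induction hypothesis, and the third pulling out the nonnegative factor $\lambda$.

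\textbf{The limit.} Finally, $\succ(\lambda D) = \lim_{n\to\infty}\succ_n(\lambda D) = \lim_{n\to\infty}\lambda\,\succ_n(D) = \lambda\lim_{n\to\infty}\succ_n(D) = \lambda\,\succ(D)$, using Lemma~\ref{lem:limsup} to know the limits exist. I do not anticipate any real obstacle here; the only thing to be careful about is the scale-invariance of the proportionality condition (and handling $\lambda = 0$ separately, since then the bijection of splits breaks down), but both are routine.
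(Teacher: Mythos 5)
Your proof is correct and is exactly the argument the paper intends — the paper's own proof is the one-liner ``follows right from the definition via induction,'' and your write-up simply fills in the details (base case by direct computation, inductive step via the scale-invariance of allowed splits, separate handling of $\lambda=0$). No difference in approach, just in level of detail.
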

\begin{proof}
  The statements follow right from the definition of $\succ_n$ and $\succ$ via induction.
\end{proof}

\begin{proposition}[monotonicity]
\label{prop:mon}
Let $D, E \in \R_{\ge 0}^{2 \times k}$ with $E \ge D$ (component-wise). Then $\succ_n(E) \ge \succ_n(D)$ for all $n \in \N$. Consequently, $\succ(E) \ge \succ(D)$.
\end{proposition}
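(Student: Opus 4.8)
The plan is to prove the stronger statement that, for every fixed $n$, the map $\succ_n \colon \R_{\ge 0}^{2\times k} \to \R$ is non-decreasing in every coordinate, by induction on $n$, and then to pass to the limit for the claim about $\succ$.

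For the base case $n=0$, I would first note that for each fixed $x\in\{0,1\}$ the map $D\mapsto \sum_{j\in[k]}D_{x,j}-\max_{j\in[k]}D_{x,j}$ is non-decreasing in every coordinate: raising one entry $D_{x,j_0}$ by some $\delta\ge 0$ increases $\sum_j D_{x,j}$ by exactly $\delta$ while increasing $\max_j D_{x,j}$ by at most $\delta$ (equivalently, this quantity equals $\min_{j_0}\sum_{j\ne j_0}D_{x,j}$, a minimum of coordinate-wise non-decreasing functions), and changing a coordinate with the other value of $x$ leaves the expression unchanged. Since a pointwise maximum of non-decreasing functions is non-decreasing, $\succ_0=\max_{x}(\cdots)$ is non-decreasing, settling $n=0$.

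For the inductive step, assume $\succ_{n-1}$ is non-decreasing and let $E\ge D$. Fix an allowed split $(D_0,D_1)$ of $D$ attaining $\succ_n(D)=\succ_{n-1}(D_0)+\succ_{n-1}(D_1)$, say a $j$-allowed split with proportionality constant $\lambda\in[0,1]$. I would then construct a $j$-allowed split $(E_0,E_1)$ of $E$ with $E_0\ge D_0$ and $E_1\ge D_1$ as follows. On the block $\{0,1\}\times([k]\setminus\{j\})$ put $E_0:=\lambda E$ and $E_1:=(1-\lambda)E$ there; this keeps $E_0$ proportional to $E$ with the same $\lambda\in[0,1]$, gives $E_0+E_1=E$ there, and, since $E\ge D\ge 0$ and $\lambda,1-\lambda\ge 0$, gives $E_0\ge\lambda D=D_0$ and $E_1\ge(1-\lambda)D=D_1$ on this block. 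On the two coordinates $(x,j)$, $x\in\{0,1\}$, absorb the slack $E_{x,j}-D_{x,j}\ge 0$ into the first vector: set $(E_0)_{x,j}:=(D_0)_{x,j}+(E_{x,j}-D_{x,j})$ and $(E_1)_{x,j}:=(D_1)_{x,j}$; all entries stay non-negative, $(E_0)_{x,j}+(E_1)_{x,j}=E_{x,j}$, and trivially $(E_0)_{x,j}\ge(D_0)_{x,j}$, $(E_1)_{x,j}\ge(D_1)_{x,j}$. Hence $(E_0,E_1)$ is a $j$-allowed split of $E$, so by Definition~\ref{def:succn}(ii) and the induction hypothesis applied twice, $\succ_n(E)\ge\succ_{n-1}(E_0)+\succ_{n-1}(E_1)\ge\succ_{n-1}(D_0)+\succ_{n-1}(D_1)=\succ_n(D)$.

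Finally, the ``consequently'' part follows by taking limits, which preserves weak inequalities: $\succ(E)=\lim_{n\to\infty}\succ_n(E)\ge\lim_{n\to\infty}\succ_n(D)=\succ(D)$, the limits existing by Lemma~\ref{lem:limsup}. I do not anticipate a real obstacle here; the one point that needs a moment of care is checking that the extended split $(E_0,E_1)$ is genuinely $j$-allowed (in particular that the proportionality constant remains in $[0,1]$, which is immediate as it is the same $\lambda$). Conceptually, the inductive step is just a monotone coupling of the split of $D$ with a split of $E$, and the base case is routine once one recalls that ``$\ell_1$-mass minus $\ell_\infty$-mass'' is coordinate-wise monotone.
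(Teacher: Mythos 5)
Your proof is correct and follows essentially the same route as the paper's: induction on $n$, coupling each allowed split of $D$ with a componentwise-dominating allowed split of $E$ and then comparing via the induction hypothesis. The only difference is in how the dominating split is built --- the paper rescales each coordinate by $e_{ij}/d_{ij}$, whereas you keep the same proportionality constant $\lambda$ on the off-$j$ block and absorb all the slack on player $j$'s coordinates into $E_0$; your variant is equally valid and even sidesteps the implicit division-by-zero when some $d_{ij}=0$.
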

\begin{proof}
Clearly $\succ_{0}(E)\ge\succ_{0}(D)$. 
Hence assume that for some $n\in\N$, we have $\succ_{n}(E)\ge\succ_{n}(D)$
for all $E\ge D$. Let $(D^0,D^1)$ with $D^0=(d^0_{ij})$ and $D^1=D-D^0=(d^1_{ij})$ be an allowed split of $D = (d_{ij})$. Building on these, define $E^0 := E^0(D^0,D) = (e^0_{ij})$ with $e^0_{ij} := d^0_{ij} \frac{e_{ij}}{d_{ij}}$ and $E^1 := E^1(D^1,D) := E - E^0$, hence $e^1_{ij} = d^1_{ij} \frac{e_{ij}}{d_{ij}}$. Then $(E^0,E^1)$ is an allowed split of $E$ satisfying $E^0 \ge D^0$ and $E^1 \ge D^1$. Hence  
  \begin{align*}
  \succ_{n+1}(E) &\ge \max_{(D^0,D^1)} ( \succ_{n}(E^0)+\succ_{n}(E^1)) \\
  & \ge \max_{(D^0,D^1)} (\succ_{n}(D^0)+\succ_{n}(D^1)) =\succ_{n+1}(D),
  \end{align*}
where the maxima range over all allowed splits $(D^0,D^1)$
of $D$.
\end{proof}

From the previous definitions and observations, we derive that the game values for games starting with a distribution $D$, that is, $\|D\|_1 = 1$, and the success probabilities of the optimal cryptogenographic protocols for $D$, are equal.

\begin{lemma}
\label{lem:lim-n-bit} Let $D\in\R_{\ge0}^{2 \times k}$ with $\|D\|_1=1$. Then for all $n \in \N$, our definitions of $\succ_n$ coincide with the ones of Brody et al., which are the success probabilities of the best $n$-round cryptogenographic protocols for the distribution $D$. Consequently, also the definition of $\succ(D)$ coincides.
\end{lemma}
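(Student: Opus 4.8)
The plan is to establish the correspondence between the two formulations by matching them round-by-round. First I would make precise what ``the definition of Brody et al.'' gives: for a distribution $D$ and $n \in \N$, the quantity $\text{bjsw}_n(D)$ is the supremum, over all cryptogenographic protocols of at most $n$ rounds, of the players' success probability when the initial secret distribution is $D$. The key structural fact recalled in Section~\ref{sec:intuition} (and proved as \cite[Lemma~4.1]{BrodyJSW14}) is that every round of a protocol corresponds to an allowed split $D = p D^0 + (1-p) D^1$, and conversely every allowed split is realizable; together with the observation that the optimal terminal behaviour is the zero-bit strategy with payoff $p(D)$, this already gives the recursion $\text{bjsw}_0(D) = p(D) = \succ_0(D)$ and $\text{bjsw}_n(D) = \max_{(D^0,D^1)} \big( p\,\text{bjsw}_{n-1}(D^0) + (1-p)\,\text{bjsw}_{n-1}(D^1) \big)$, the maximum over allowed splits written in convex-combination form $D = pD^0 + (1-p)D^1$.

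Next I would reconcile this convex-combination recursion with Definition~\ref{def:succn}. The point is purely notational: given an allowed split $D = pD^0 + (1-p)D^1$ in the Brody et al.\ sense, set $\tilde D_0 := pD^0$ and $\tilde D_1 := (1-p)D^1$, so that $D = \tilde D_0 + \tilde D_1$ and $(\tilde D_0, \tilde D_1)$ is an allowed split in the sense of Definition~\ref{def:allowed} (the proportionality on $\{0,1\}\times([k]\setminus\{j\})$ is preserved under scaling). Conversely, from an additive allowed split $(\tilde D_0, \tilde D_1)$ with $\tilde D_0 \ne 0 \ne \tilde D_1$ one recovers $p := \|\tilde D_0\|_1$, $D^0 := \tilde D_0 / p$, $D^1 := \tilde D_1/(1-p)$ (the degenerate cases $p \in \{0,1\}$ correspond to the trivial split $(D,0)$, which by Lemma~\ref{lem:limsup}'s proof does not decrease the value and matches the convention $\text{bjsw}_n \ge \text{bjsw}_{n-1}$). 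Using scalability (Proposition~\ref{prop:scaling}), $p\,\text{bjsw}_{n-1}(D^0) = \text{bjsw}_{n-1}(pD^0) = \text{bjsw}_{n-1}(\tilde D_0)$ by the inductive hypothesis, and similarly for the other branch, so the two recursions are literally the same recursion written in two notations. An induction on $n$, with the base case $\succ_0 = p = \text{bjsw}_0$, then yields $\succ_n(D) = \text{bjsw}_n(D)$ for all $n$.

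Finally, the statement about $\succ(D)$ follows immediately: by Lemma~\ref{lem:limsup}, $\succ(D) = \sup_n \succ_n(D) = \sup_n \text{bjsw}_n(D)$, and $\sup_n \text{bjsw}_n(D)$ is exactly the success probability of the best (unbounded-round) cryptogenographic protocol, since any protocol with expected finite running time is approximated arbitrarily well by truncating it to $n$ rounds and using the zero-bit strategy at the cut (this monotone-limit argument is precisely what Brody et al.\ use to define $\succ$ in the first place).

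I expect the main obstacle to be bookkeeping rather than mathematics: the careful handling of the reparametrization between the multiplicative splits of Definition~\ref{def:allowed} and the probabilistic convex combinations of \cite{BrodyJSW14}, in particular verifying that the proportionality/allowedness condition really is invariant under the scaling $D^i \mapsto \tilde D_i$ and that the degenerate split $p \in \{0,1\}$ is correctly absorbed by the ``do nothing'' round. A secondary subtlety is making sure we are quoting exactly the right lemma from \cite{BrodyJSW14} for the claim that \emph{every} allowed split is realizable by an actual protocol round, since that direction is what guarantees $\succ_n \le \text{bjsw}_n$ (the reverse inequality $\text{bjsw}_n \le \succ_n$ being the ``soundness'' direction that every protocol round induces an allowed split). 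Everything else is a routine induction.
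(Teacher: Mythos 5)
Your proposal is correct and follows essentially the same route as the paper's proof: induction on $n$ with base case $\succ_0 = s_0$ by definition, the inductive step citing Lemmas~4.1 and~4.2 of~\cite{BrodyJSW14} for the two-way correspondence between protocol rounds and allowed convex combinations, and scalability (Proposition~\ref{prop:scaling}) to convert $\lambda\,\succ_{n-1}(D^0)$ into $\succ_{n-1}(\lambda D^0)$ so that the convex-combination recursion becomes the additive-split recursion of Definition~\ref{def:succn}. The only cosmetic point is that scalability should be applied to $\succ_{n-1}$ \emph{after} invoking the inductive hypothesis (as the paper does), not to the Brody et al.\ quantity directly, but this is a trivial reordering.
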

\begin{proof}
  Let us for the moment denote the success probabilities defined by Brody et al.\ by $s_n(D)$ and $s(D)$ and then show that $\succ_n(D) = s_n(D)$ and consequently $\succ(D) = s(D)$ for all distributions $D$. 
  
  By definition, we have $\succ_0(D) = s_0(D)$ for all distributions $D$. Lemma 4.1 and 4.2 of~\cite{BrodyJSW14} establish that the first round of any cryptogenographic protocol for the distribution $D$ with some probability $\lambda$ leads to a distribution $D^0$ and with probability $\bar \lambda := 1-\lambda$ leads to a position $D^1$ such that $D = \lambda D^0 + \bar\lambda D^1$ and $D^0, D^1$ are proportional to $D$ on $\{0,1\} \times ([k] \setminus \{j\})$ for some $j \in [k]$. Conversely, for any such $\lambda, D^0, D^1$ there is a one-round cryptogenographic protocol leading to the distribution $D^0$ with probability $\lambda$ and to $D^1$ with probability $\bar\lambda$. Hence for any $n\ge 1$, the success probability $s_n(D)$ of the optimal $n$-round protocol for the distribution $D$ is $s_n(D) = \max_{\lambda, D^0, D^1} (\lambda s_{n-1}(D^0) + \bar\lambda s_{n-1}(D^1))$, where $\lambda, D^0, D^1$ run over all values as above. Note that these are exactly those values which make $(\lambda D^0, \bar\lambda D^1)$ an allowed split of $D$. By induction and scalability, we obtain 
  \begin{align*}
  s_n(D) &= \max_{\lambda, D^0, D^1} (\lambda \, \succ_{n-1}(D^0) + \bar\lambda \, \succ_{n-1}(D^1)) \\
  & = \max_{\lambda, D^0, D^1} (\succ_{n-1}(\lambda D^0) + \succ_{n-1}(\bar\lambda D^1))  \\
  & = \max_{(\bar D^0, \bar D^1)} (\succ_{n-1}(\bar D^0) + \succ_{n-1}(\bar D^1)) = \succ_n(D),
  \end{align*}
  where the last maximum is taken over all allowed splits $(\bar D^0, \bar D^1)$ of $D$.
\end{proof}

\subsection{Example: The Best-so-far 2-Player Protocol}
\label{sec:best-so-far}

We now turn to the case of two players. We use this subsection to describe the best known protocol for two players in the different languages. We also use this mini-example to sketch the approaches used in the following subsections to design superior protocols. 

For two players, we usually write a game position $D = (d_{01}, d_{11}, d_{02}, d_{12}) \in \R_{\ge 0}^{2 \times 2}$ as $D = (a,b,c,d)$. The $0$-round game value (equaling the success probability of the $0$-bit protocol) then is \[\succ_0(D) =\max\{\min\{a,c\},\min\{b,d\}\}.\] 

As a warmup, let us describe the best known 2-player protocol $\TwoBit$ in the two languages. In the language of Brody et al., the first player can send a (randomized) bit that transforms the initial distribution $(\tfrac 14,\tfrac 14,\tfrac 14,\tfrac 14)$ with probability $\tfrac 12$ each into the distributions $(\tfrac 13,\tfrac 13,\tfrac 16,\tfrac 16)$ and $(\tfrac 16,\tfrac 16,\tfrac 13,\tfrac 13)$. In the first case, the second player can send a bit leading to each of the distributions $(\tfrac 13,\tfrac 13,\tfrac 13,0)$ and $(\tfrac 13,\tfrac 13,0,\tfrac 13)$ with probability $\tfrac 12$, both having a $0$-bit success probability of $\tfrac 13$. In the second possible result of the first move, the first player can lead to an analogous situation. Consequently, after two rounds of the protocol we end up with four equally likely distributions all having a $0$-bit success probability of $\tfrac 13$. Hence the protocol $\TwoBit$ has a success probability of $\tfrac 13$.

In the language of the splitting games, we can forget about the probabilities and simply split up the initial distribution. Using the scaling invariance, to ease reading we scaled up all numbers by a factor of $12$. Figure~\ref{fig:twobittree} shows the game tree corresponding to the $\TwoBit$ protocol. It shows that $\succ_2(3,3,3,3) \ge 4$, proving again the existence of a cryptogenographic protocol for the distribution $(\tfrac 14,\tfrac 14,\tfrac 14,\tfrac 14) = \tfrac 1{12} (3,3,3,3)$ with success probability $\tfrac 4 {12} = \tfrac 13$.

Note that each allowed split $(D^0,D^1)$ of $D$ implies the inequality $\succ(D) \ge \succ(D^0) + \succ(D^1)$, which follows from clause (ii) of Definition~\ref{def:succn} and taking the limit $n \to \infty$. Hence the game tree giving the $\tfrac 13$ lower bound for the success probability equivalently gives the following proof via inequalities.
\begin{eqnarray*}
\succ(3,3,3,3) & \ge & \succ(2,2,1,1)+\succ(1,1,2,2),\\
\succ(2,2,1,1)=\succ(1,1,2,2) & \ge & \succ(1,0,1,1)+\succ(0,1,1,1),\\
\succ(1,0,1,1)=\succ(0,1,1,1) & \ge & \succ_{0}(0,1,1,1)=1.
\end{eqnarray*}

\begin{figure}[h]
\centering
\includegraphics[width=0.6\textwidth]{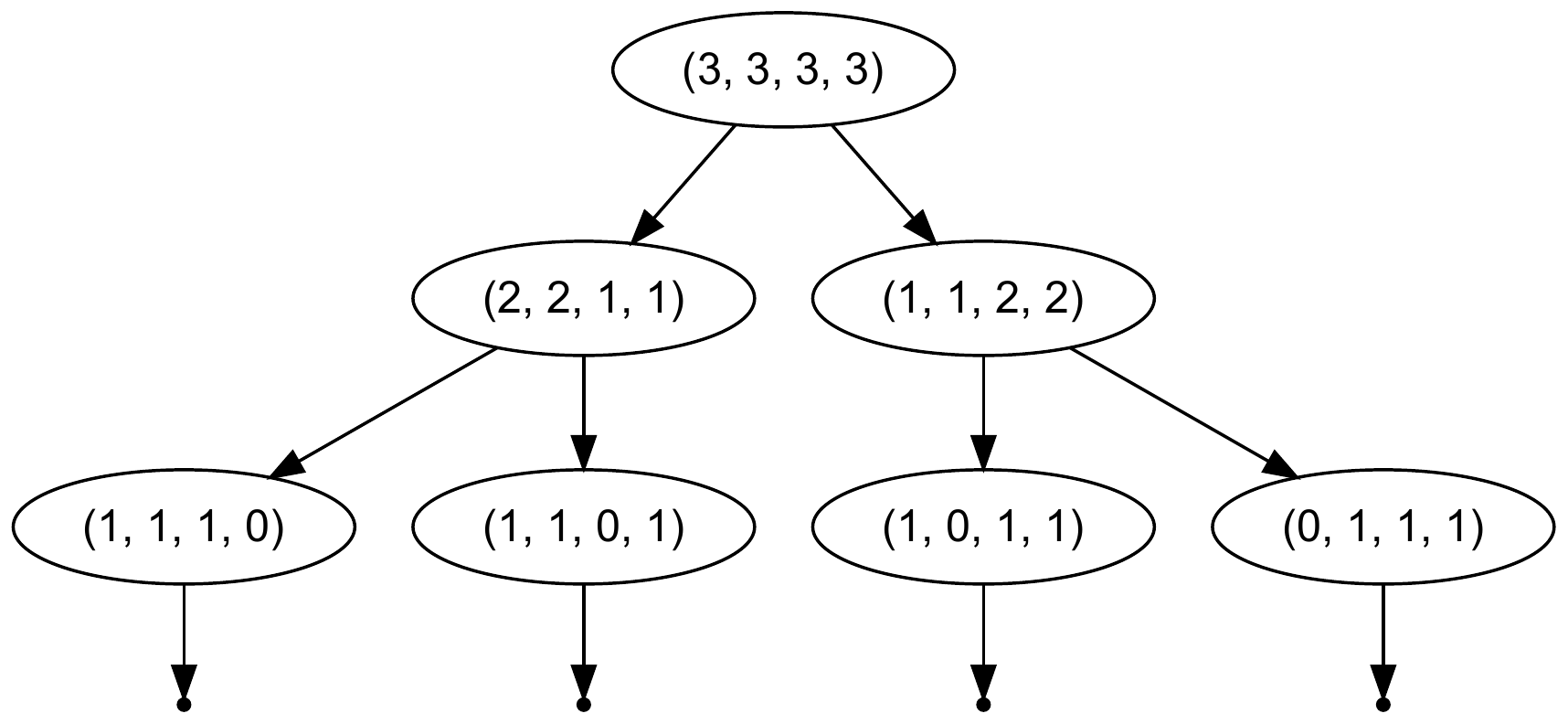}
\caption{Game tree corresponding to $\TwoBit$}
\label{fig:twobittree}
\end{figure}

The splitting game and the inequality view will in the following be used to design stronger protocols (better lower bounds for the optimal success probability). We shall compute good game trees by computing lower bounds for the game values of a discrete set of positions via repeatedly trying allowed splits. For example, the above game tree for the starting position $(3,3,3,3)$ could have easily be found by recursively computing the game values for all positions in $\{0,1,2,3\}^4$. 

It turns out that such an automated search leads to better results when we also allow scaling moves (referring to Proposition~\ref{prop:scaling}). For example, in the above mini-example of computing optimal game values for all positions $\{0,1,2,3\}^4$, we could try to exploit the fact that $\succ(1,1,1,1) = \tfrac 13 \succ(3,3,3,3)$. Such scaling moves are a cheap way of working in $\{0,1,2,3\}^4$, but trying to gain the power of working in $\{0,1,\ldots, 9\}^4$, which is computationally more costly, especially with regard to memory usage. Scaling moves may lead to repeated visits of the same position, resulting in cyclic structures. Here translating the allowed splits used in the tree into the inequality formulation and then using an LP-solver is an interesting approach (detailed in Section~\ref{sec:verification}). It allows to post-optimize the game trees found, in particular, by solving cyclic dependencies. This leads to slightly better game values and  compacter representations of game trees. 

\subsection{Useful Facts}
\label{sec:usefulFacts}

For some positions of the vector splitting game, the true value is easy to determine. We do this here to later ease the presentation of the protocols.

\begin{restatable}{proposition}{concUB}
\label{prop:concUB}We have $\succ(a,b,c,d)\le\min\{a,c\}+\min\{b,d\}$.\end{restatable}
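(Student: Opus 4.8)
The plan is to exhibit the function $g(a,b,c,d) := \min\{a,c\} + \min\{b,d\}$ as an upper bound for every finite-round value $\succ_n$ and then pass to the limit. Concretely, I would prove by induction on $n$ that $\succ_n(D) \le g(D)$ for all $D = (a,b,c,d) \in \R_{\ge0}^{2\times 2}$; since $\succ(D) = \lim_{n\to\infty}\succ_n(D)$ by Lemma~\ref{lem:limsup}, this gives $\succ(D) \le g(D)$, which is exactly the claim. The base case $n=0$ is immediate: $\succ_0(D) = \max\{\min\{a,c\},\min\{b,d\}\} \le \min\{a,c\}+\min\{b,d\} = g(D)$, since both summands are non-negative.

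The induction step reduces to showing that $g$ is \emph{superadditive under allowed splits}, i.e.\ $g(D^0)+g(D^1)\le g(D)$ whenever $(D^0,D^1)$ is an allowed split of $D$. Indeed, granting this and the induction hypothesis, $\succ_{n+1}(D) = \max_{(D^0,D^1)}\big(\succ_n(D^0)+\succ_n(D^1)\big) \le \max_{(D^0,D^1)}\big(g(D^0)+g(D^1)\big) \le g(D)$, the maxima running over all allowed splits. To prove the superadditivity I would use the elementary fact that $\min\{x_1,y_1\}+\min\{x_2,y_2\}\le\min\{x_1+x_2,\,y_1+y_2\}$ for non-negative reals (the left-hand side is at most $x_1+x_2$ and at most $y_1+y_2$, hence at most their minimum), combined with the proportionality structure of Definition~\ref{def:allowed}. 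Write $D^0=(a_0,b_0,c_0,d_0)$ and $D^1=D-D^0=(a_1,b_1,c_1,d_1)$. If the split is $1$-allowed, there is $\lambda\in[0,1]$ with $(c_0,d_0)=\lambda(c,d)$ and $(c_1,d_1)=(1-\lambda)(c,d)$, so
\[
\min\{a_0,c_0\}+\min\{a_1,c_1\} \le \min\{a_0+a_1,\ \lambda c+(1-\lambda)c\} = \min\{a,c\},
\]
and likewise $\min\{b_0,d_0\}+\min\{b_1,d_1\}\le\min\{b,d\}$; adding these gives $g(D^0)+g(D^1)\le\min\{a,c\}+\min\{b,d\}=g(D)$. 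If the split is $2$-allowed, there is $\mu\in[0,1]$ with $(a_0,b_0)=\mu(a,b)$ and $(a_1,b_1)=(1-\mu)(a,b)$; grouping the two $a$-terms against $c_0+c_1=c$ and the two $b$-terms against $d_0+d_1=d$ yields the same bound by the identical one-line inequality.

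There is no genuine obstacle here; the only thing requiring a little care is to keep straight which coordinate pair is the ``proportional'' one in each of the two types of allowed split and to group the four min-terms accordingly, after which both cases collapse to the same subadditivity-of-$\min$ estimate. (The definition of proportionality via $(D^0)_{|\{0,1\}\times([k]\setminus\{j\})} = \lambda\,D_{|\{0,1\}\times([k]\setminus\{j\})}$ involves no division, so positions with zero entries need no special treatment.)
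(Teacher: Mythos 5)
Your proof is correct and rests on exactly the same key fact as the paper's — superadditivity of $(x,y)\mapsto\min\{x,y\}$, which (as you implicitly show by never actually needing the proportionality constant) holds for \emph{arbitrary} splits, not just allowed ones — combined with the trivial bound $\succ_0(D)\le\min\{a,c\}+\min\{b,d\}$. The only difference is presentational: the paper invokes its concavity method (Lemma~\ref{lem:concavity}) as a black box, whereas you inline the induction on $n$ and the passage to the limit that that lemma encapsulates, so your proposal is essentially the paper's argument made self-contained.
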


This statement is a simple corollary of the concavity method detailed in Section~\ref{sec:concavityMethod}, hence at this point, we only state the proposition and postpone the proof.

\begin{proposition}\label{prop:zero}
Let $D=(a,b,c,0)$. Then $\succ(D)=\succ_{0}(D)=\min\{a,c\}$.
\end{proposition}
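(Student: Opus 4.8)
The plan is to establish the two claimed equalities separately. The identity $\succ_0(D) = \min\{a,c\}$ is just a direct evaluation of clause~(i) of Definition~\ref{def:succn} at $D = (a,b,c,0)$: one obtains $\max\{\min\{a,c\},\min\{b,0\}\}$, and since $b \ge 0$ the term $\min\{b,0\}$ equals $0$, which is $\le \min\{a,c\}$, so this maximum is $\min\{a,c\}$.

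For $\succ(D) = \succ_0(D)$ I would sandwich. The lower bound $\succ(D) \ge \succ_0(D)$ is immediate from Lemma~\ref{lem:limsup}, since $\succ(D) = \sup_n \succ_n(D) \ge \succ_0(D)$. For the upper bound, the quickest route is to quote Proposition~\ref{prop:concUB}, which gives $\succ(a,b,c,0) \le \min\{a,c\} + \min\{b,0\} = \min\{a,c\}$. Since the proof of Proposition~\ref{prop:concUB} is deferred to the concavity section, I would alternatively present a short self-contained induction showing $\succ_n(a,b,c,0) \le \min\{a,c\}$ for every $n \in \N$, and then let $n \to \infty$ using Lemma~\ref{lem:limsup}.

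The inductive step is the only place where anything happens, and even there the two ingredients are elementary. First, because $D = (a,b,c,0)$ has last coordinate $0$ and any allowed split $(D^0,D^1)$ consists of non-negative vectors summing to $D$, both $D^0$ and $D^1$ must have last coordinate $0$; write $D^0 = (a_0,b_0,c_0,0)$ and $D^1 = (a_1,b_1,c_1,0)$ with $a_0+a_1 = a$ and $c_0+c_1 = c$. Second, $\min$ is superadditive, so $\min\{a_0,c_0\}+\min\{a_1,c_1\} \le \min\{a_0+a_1,c_0+c_1\} = \min\{a,c\}$. Combining these with the inductive hypothesis applied to $D^0$ and $D^1$ bounds $\succ_{n-1}(D^0) + \succ_{n-1}(D^1)$ by $\min\{a,c\}$; taking the maximum over all allowed splits in clause~(ii) of Definition~\ref{def:succn} then gives $\succ_n(D) \le \min\{a,c\}$.

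I do not anticipate a genuine obstacle here: the statement is essentially a sanity check that a player sitting on top of a distribution in which the other player never holds the secret cannot do better than the zero-bit strategy. The one point worth stating explicitly is that an allowed split cannot create mass in a coordinate that was zero, which is what pins both children into the same degenerate family and makes the induction go through.
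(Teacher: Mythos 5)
Your proposal is correct, and your primary route (lower bound from Lemma~\ref{lem:limsup}, upper bound by specializing Proposition~\ref{prop:concUB} to $d=0$) is exactly the paper's proof. The alternative self-contained induction is also sound — the observation that an allowed split cannot create mass in a zero coordinate, combined with superadditivity of $\min$, is precisely the superadditivity argument that the paper later uses to prove Proposition~\ref{prop:concUB} itself, so you are essentially inlining that proof in the special case $d=0$.
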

\begin{proof}
Clearly, $\succ(D)\ge\succ_{0}(D)=\min\{a,c\}$. By Proposition~\ref{prop:concUB},
we obtain $\succ(D)\le\min\{a,c\}$, proving the claim.
\end{proof}

\begin{proposition}\label{prop:large}
  If $D = (a,b,c,d)$ is such that $a+b \le \min\{c,d\}$, then $\succ(D) = a+b$.
\end{proposition}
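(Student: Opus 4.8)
The plan is to prove the two inequalities $\succ(D)\le a+b$ and $\succ(D)\ge a+b$ separately. The upper bound is immediate from Proposition~\ref{prop:concUB}: the hypothesis $a+b\le\min\{c,d\}$ in particular gives $a\le c$ and $b\le d$, so $\succ(a,b,c,d)\le\min\{a,c\}+\min\{b,d\}=a+b$.

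For the lower bound I would exhibit a single allowed split after which both children are already evaluated by the zero-bit strategy. First dispose of the degenerate case $\min\{c,d\}=0$: then $a+b\le 0$ forces $a=b=0$, and $\succ(D)=0=a+b$ (again using Proposition~\ref{prop:concUB} for ``$\le$''). So assume $c,d>0$ and set $\mu:=a/c$. Apply the $1$-allowed split that routes all of the ``secret $0$, owner $1$'' mass to one child and all of the ``secret $1$, owner $1$'' mass to the other, namely $D^0:=(a,0,a,ad/c)$ and $D^1:=(0,b,c-a,(c-a)d/c)$. One checks that $D=D^0+D^1$, that $D^0$ (respectively $D^1$) is proportional to $D$ on the second player's coordinates with ratio $\mu$ (respectively $1-\mu$), and that $\mu\in[0,1]$ because $a\le a+b\le c$; also $c-a\ge 0$ for the same reason, so $D^0,D^1\in\R_{\ge0}^{2\times 2}$. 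Hence $(D^0,D^1)$ is an allowed split of $D$.

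It then suffices to show $\succ_0(D^0)=a$ and $\succ_0(D^1)=b$, since this gives $\succ(D)\ge\succ_1(D)\ge\succ_0(D^0)+\succ_0(D^1)=a+b$. For $D^0=(a,0,a,ad/c)$ we have $\succ_0(D^0)=\max\{\min\{a,a\},\min\{0,ad/c\}\}=a$. For $D^1=(0,b,c-a,(c-a)d/c)$ we have $\succ_0(D^1)=\max\{\min\{0,c-a\},\min\{b,(c-a)d/c\}\}$, so we must verify $(c-a)d/c\ge b$, i.e.\ $cd\ge ad+bc$. This is the one computation in the argument and it follows from the hypothesis via $ad+bc\le\max\{c,d\}\,(a+b)\le\max\{c,d\}\cdot\min\{c,d\}=cd$.

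The only real obstacle is spotting the correct split: rather than trying to ``balance'' the mass, one must separate the two secret values held by player $1$ into different children, so that each child's zero-bit value captures exactly one of $a$ and $b$; the proportionality requirement on player $2$'s coordinates then pins down $\mu$, and the feasibility inequality $ad+bc\le cd$ is precisely what guarantees that the two zero-bit evaluations actually attain $a$ and $b$.
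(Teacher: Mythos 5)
Your proof is correct and follows essentially the same route as the paper: the same upper bound via Proposition~\ref{prop:concUB}, and the same lower-bound idea of a single allowed split separating player 1's two secret values so that the children's zero-bit values are $a$ and $b$. The paper merely streamlines the split by first invoking monotonicity (Proposition~\ref{prop:mon}) to replace $(a,b,c,d)$ by $(a,b,a+b,a+b)$, so the children become $(a,0,a,a)$ and $(0,b,b,b)$ and your verification of $cd\ge ad+bc$ is not needed.
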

\begin{proof}
  We have $D \ge D' := (a,b,a+b,a+b)$ and $(D^0,D^1)$ with $D^0 = (a,0,a,a)$ and $D^1 = (0,b,b,b)$ is an allowed split of $D'$. Hence $\succ(D) \ge \succ(D') \ge \succ(D^0) + \succ(D^1) = a+b$. The upper bound follows immediately from Proposition~\ref{prop:concUB}.
\end{proof}

\subsection{Small Protocols Beating the $1/3$ Barrier}
\label{sec:simpleBetterProtocol}

We now present a sequence of protocols showing that there are cryptogenographic protocols having a success probability strictly larger than $\tfrac 13$. These protocols are still relatively simple, so we also obtain a human-readable proof of the following result.

\begin{theorem}\label{thm:simpleExnon-optimal}
  $\succ(\frac{1}{4},\frac{1}{4},\frac{1}{4},\frac{1}{4}) \ge \frac{449}{1334} \approx 0.3341$. 
\end{theorem}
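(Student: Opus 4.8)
The plan is to prove the bound by exhibiting an explicit finite game tree for the (scaled) starting position and reading off the resulting inequality chain, exactly in the style of the $\TwoBit$ example. By Lemma~\ref{lem:lim-n-bit} and scalability (Proposition~\ref{prop:scaling}), it suffices to find an integer-scaled position $N\cdot(\tfrac14,\tfrac14,\tfrac14,\tfrac14) = (M,M,M,M)$ and a sequence of allowed splits whose leaves have total $p$-value at least $\tfrac{449}{1334}\cdot 4M$. Since $1334 = 2\cdot 23\cdot 29$, I expect the natural common denominator to be $1334$ itself (or a small multiple), so I would work with the position $(\tfrac{1334}{4},\dots)$ scaled to clear denominators — concretely something like $(667,667,667,667)$ or a further multiple thereof — and build the tree there.

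First I would use the symmetry of the start position: the first allowed split should be a $1$-allowed split of $(M,M,M,M)$ into two pieces that are mirror images under swapping the two players, so that we only need to analyze one subtree. After that, the strategy is a (moderately deep, since the abstract claims depth $16$ for the simplest improving protocol) alternation of $1$-allowed and $2$-allowed splits. At each node I would record the current vector, the choice of active player $j$, the proportionality constant $\lambda$, and the two children; the only constraints to check are $D = D^0 + D^1$ and the proportionality of $D^0$ to $D$ on the coordinates of the non-active player (Definition~\ref{def:allowed}). At the leaves I would stop and invoke $\succ_0(a,b,c,d) = \max\{\min\{a,c\},\min\{b,d\}\}$, and where convenient I would shortcut using Propositions~\ref{prop:zero} and~\ref{prop:large} (e.g.\ a leaf of the form $(a,b,c,0)$ contributes exactly $\min\{a,c\}$, and a leaf with $a+b\le\min\{c,d\}$ contributes $a+b$) to keep the presentation short. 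Summing the leaf contributions and dividing by $4M$ gives the claimed $\tfrac{449}{1334}$.

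The main obstacle is simply \emph{finding} the tree: unlike $\TwoBit$, no two- or three-round tree works (indeed the $1/3$ bound shows shallow symmetric splitting saturates at $1/3$), so the tree must be deep enough to exploit the slack in Proposition~\ref{prop:concUB} at asymmetric positions. In practice this tree comes from the automated dynamic-programming search described in Section~\ref{sec:search}, and the human-readable proof is the certificate one extracts from it. A secondary subtlety is that the extended search uses scaling moves, which can create repeated positions and cyclic dependencies; for a clean finite proof of this particular bound I would either unfold any such cycle a bounded number of times (losing a negligible amount and still clearing $\tfrac{449}{1334}$) or, following Section~\ref{sec:verification}, present the finitely many distinct positions together with the linear inequalities $\succ(D)\ge\succ(D^0)+\succ(D^1)$ they induce and exhibit a feasible assignment of $\succ$-values — a finite linear-algebra check — that forces $\succ(\tfrac14,\tfrac14,\tfrac14,\tfrac14)\ge\tfrac{449}{1334}$. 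Verifying such a certificate is routine; the content is entirely in the search that produced it.
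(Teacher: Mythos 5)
Your proposal correctly identifies the proof \emph{method} — exhibit a certificate of allowed splits (plus scaling moves via Proposition~\ref{prop:scaling}), evaluate leaves with $\succ_0$ and Propositions~\ref{prop:zero} and~\ref{prop:large}, and resolve any self-referential position either by unfolding finitely or by solving the resulting linear inequality — and this is exactly how the paper proceeds. The paper's certificate starts from $(12,12,12,12)$, splits it into $(7,7,6,4)+(5,5,6,8)$, disposes of the second summand, and then derives $\succ(7,7,6,4)\ge\tfrac{75}{12}+\tfrac29\,\succ(7,7,6,4)$ via two claims, one of which revisits $(7,7,6,4)$; solving this fixed-point inequality gives $\succ(7,7,6,4)\ge\tfrac{225}{28}$ and hence the theorem. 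Your remark that the cycle can be unfolded $\ell$ times at a negligible loss is precisely the paper's justification that the inequality corresponds to actual finite game trees.

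However, there is a genuine gap: you never produce the certificate. The entire mathematical content of this theorem is the specific list of positions and splits, and your proposal contains none of them — it is a description of what a proof would look like, deferring the construction to ``the automated search.'' A proof of a concrete numerical bound like $\tfrac{449}{1344}$ cannot omit the witness. Two of your structural guesses are also off. First, you predict the initial split of $(M,M,M,M)$ should be symmetric under swapping the players; the paper's first split $(12,12,12,12)\to(7,7,6,4)+(5,5,6,8)$ is deliberately asymmetric, and indeed symmetric shallow splitting is exactly what saturates at $\tfrac13$. Second, your numerology on the denominator led you toward $(667,667,667,667)$; the stated denominator $1334$ is a typo for $1344=48\cdot 28$ (consistent with the decimal $0.33407738\ldots$ in the paper's own proof), and the correct scale to work at is the modest $(12,12,12,12)$, with auxiliary rescalings to positions such as $(8,10,6,4)$, $(6,12,6,12)$, $(9,6,9,6)$, and $(9,9,9,6)$ along the way.
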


\begin{proof}
To be able to give a readable mathematical proof, we argue via inequalities for game values $\succ(\cdot)$. We later discuss how the corresponding protocols (game trees) look like. 

We first observe the following inequalities, always stemming from allowed splits (the
underlined entries are proportional). Whenever Proposition~\ref{prop:zero} or~\ref{prop:large} determine a value, we exploit this without further notice.
\begin{eqnarray*}
  \succ(\underline{12,12},12,12) & \ge & \succ(\underline{7,7},6,4)+\succ(\underline{5,5},6,8),\\
  \succ(\underline{5,5},6,8) & \ge & \succ(\underline{2,2},0,2)+\succ(\underline{3,3},6,6) = 2+6 = 8.
\end{eqnarray*}
This proves
$\succ(12,12,12,12)\ge8+\succ(7,7,6,4)$.
To analyze $\succ(7,7,6,4)$, we use the allowed split
\begin{eqnarray}
  \succ(7,7,\underline{6,4}) & \ge & \succ(4,5,\underline{3,2}) + \succ(3,2,\underline{3,2}) \label{eq:succ7764}
\end{eqnarray}
and regard the two positions $(4,5,3,2)$ and $(3,2,3,2)$ separately in some detail.

\textit{Claim 1: The value of $(4,5,3,2)$ satisfies $\succ(4,5,3,2) \ge \tfrac{55}{12}$.}
By scaling, we have $\succ(4,5,3,2)=\frac{1}{2}\,\succ(8,10,6,4)$.
We present the allowed splits
\begin{eqnarray*}
\succ(\underline{8,10},6,4) & \ge & \succ(\underline{4,5},2,4)+\succ(\underline{4,5},4,0) = \succ(4,5,2,4)+4,\\
\succ(4,5,\underline{2,4}) & \ge & \succ(1,2,\underline{1,2})+\succ(3,3,\underline{1,2}) = \succ(1,2,1,2)+3,
\end{eqnarray*}
hence $\succ(8,10,6,4)\ge\succ(1,2,1,2)+7.$ To bound the latter
term, we use the scaling $\succ(1,2,1,2)=\frac{1}{6}\,\succ(6,12,6,12)$
and consider the allowed splits
\begin{eqnarray*}
\succ(\underline{6,12},6,12) & \ge & \succ(\underline{5,10},3,9)+\succ(\underline{1,2},3,3) = \succ(5,10,3,9)+3,\\
\succ(5,10,\underline{3,9}) & \ge & \succ(0,6,\underline{2,6})+\succ(5,4,\underline{1,3}) = 6+4 = 10.
\end{eqnarray*}
Thus $\succ(6,12,6,12) = 13$ and $\succ(1,2,1,2)=\frac{13}{6}$. This shows
$\succ(4,5,3,2)=\frac{1}{2}\,(\succ(1,2,1,2)+7)=\frac{55}{12}$.

\textit{Claim 2: We have $\succ(3,2,3,2) \ge \tfrac 53 + \tfrac 29 \, \succ(7,7,6,4)$.}
By scaling, we obtain $\succ(3,2,3,2)=\frac{1}{3}\,\succ(9,6,9,6)$ and compute
\begin{eqnarray*}
\succ(9,6,\underline{9,6}) & \ge & \succ(6,3,\underline{6,4})+\succ(3,3,\underline{3,2}),\\
\succ(6,3,\underline{6,4}) & \ge & \succ(3,0,\underline{3,2})+\succ(3,3,\underline{3,2}) = 3 + \succ(3,3,3,2),
\end{eqnarray*}
and hence  $\succ(9,6,9,6)\ge 3 + 2\,\succ(3,3,3,2)$. To bound the latter term, we scale $\succ(3,3,3,2)=\frac{1}{3}\,\succ(9,9,9,6)$
and present the allowed splits
\begin{eqnarray*}
\succ(\underline{9,9},9,6) & \ge & \succ(\underline{7,7},6,4)+\succ(\underline{2,2},3,2),\\
\succ(\underline{2,2},3,2) & \ge & \succ(\underline{1,1},1,0)+\succ(\underline{1,1},2,2) = 1+2 = 3.
\end{eqnarray*}
Thus $\succ(3,3,3,2)\ge 1+\tfrac{1}{3}\,\succ(7,7,6,4)$, implying $\succ(3,2,3,2) = \tfrac 53 + \tfrac 29\,\succ(7,7,6,4)$.

\textit{Putting things together.}
Claims 1 and 2 together with (\ref{eq:succ7764}) give 
\[\succ(7,7,6,4) \ge \tfrac{75}{12} + \tfrac 29\,\succ(7,7,6,4) .\] 
Solving this elementary equation, we obtain $\succ(7,7,6,4) \ge \tfrac{225}{28}$, and consequently, $\succ(12,12,12,12) \ge \tfrac {225}{28} + 8 = \tfrac{449}{28} = 16 + \tfrac 1 {28}$. Scaling leads to the claim of the theorem $\succ(\tfrac14,\tfrac14,\tfrac14,\tfrac14) = \tfrac 13 + \tfrac 1 {1344} = \frac{449}{1344}=0.33407738\dots$. 
\end{proof}

\begin{figure}[h]
\centering
\includegraphics[width=\textwidth]{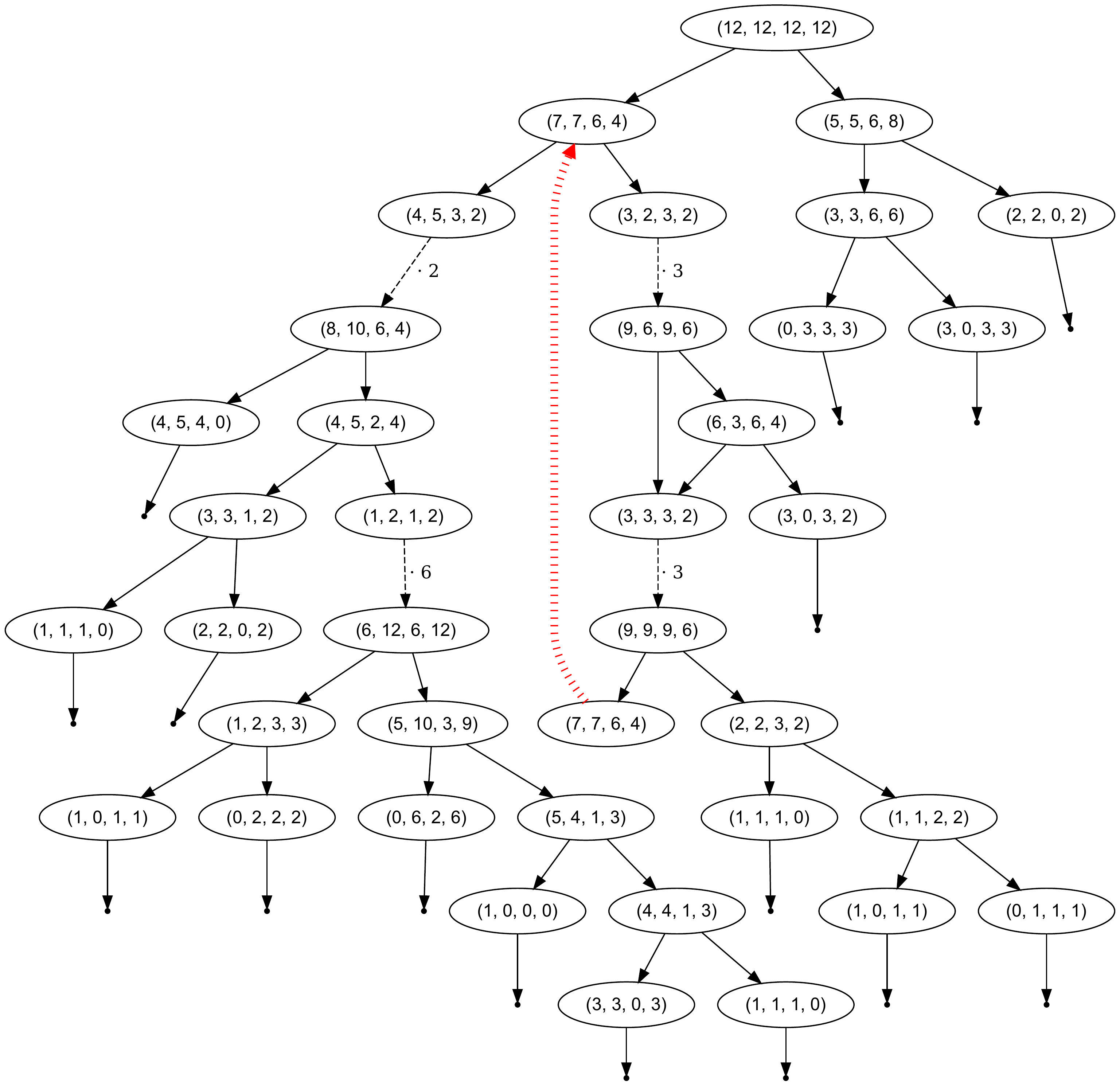}
\caption{Game tree representation of the protocols of Theorem~\ref{thm:simpleExnon-optimal}}
\label{fig:tree}
\end{figure}

When translating the inequalities into a game tree (see Figure~\ref{fig:tree} for the result), we first observe that in Claim 2 we obtained two different nodes labeled with the position $(3,3,3,2)$. Since there is no reason to treat them differently, we can identify these two nodes and thus obtain a more compact representation of the game tree. This is the reason why the node labeled $(3,3,3,2)$ in Figure~\ref{fig:tree} has two incoming edges.

Interestingly, such identifications can lead to cycles. If we translate the equations for position $(7,7,6,4)$ and its children into a graph, then we observe that the node for $(7,7,6,4)$ has a descendant also labeled $(7,7,6,4)$ (this is what led to the inequality $\succ(7,7,6,4) \ge \tfrac{75}{12} + \tfrac 29\,\succ(7,7,6,4)$). By transforming this inequality to $\succ(7,7,6,4) \ge \tfrac{225}{28}$, we obtain a statement that is true, but that does not anymore refer to an actual (finite) game tree. However, there is a sequence of game trees with values converging to the value we determined. These trees are obtained from recursively applying the above splitting procedure for $(7,7,6,4)$ a certain number $\ell$ of times and then using the $0$-round tree for the lowest node labeled $(7,7,6,4)$. The value of this game tree is $8 + \sum_{i = 0}^{\ell-1} (\tfrac 29)^i \tfrac{75}{12} + (\tfrac29)^\ell \, \succ_0(7,7,6,4) = 8 + \tfrac{75}{12}\frac{1-(\frac29)^\ell}{1-\frac29} + 6 \cdot(\tfrac29)^\ell = \tfrac{449}{28} - \tfrac{57}{28} \, (\tfrac 29)^\ell$. Hence for $\ell \ge 3$, this is more than 16 (which represents a success probability of $\tfrac 13$), corresponding to a game tree of height $4+4\ell \ge 16$.

\subsection{Automated Search}
\label{sec:search}

The vector splitting game formulation enables us to search for good cryptogenographic protocols as follows. We try to determine the game values of all positions from a discrete set $\disc:=\{0,\dots,T\}^{2 \times k}$ by repeatedly applying allowed splits. More precisely, we store a function
$s:\disc\to\R$ that gives a lower bound on the game value $\succ(D)$ of each position $D \in \disc$. We initialize this function with $s\equiv\succ_{0}$ and then in order of ascending $\|D\|_1$ try all allowed splits $D = D_0 + D_1$ and update $s(D) \leftarrow s(D_0) + s(D_1)$ in case we find that $s(D)$ was smaller. 

Recall that for any secret distribution $D$, the game value $\succ(D)$ is the supremum success probability of cryptogenographic protocols for $D$. Hence, e.g., the value $s(T,\ldots,T) / (2Tk) \le \succ(1/(2k),\ldots,1/(2k))$ is a lower bound for this success probability. As we will discuss later, by keeping track of the update operations performed, we can not only compute such a lower bound, but also concrete protocols.

Since even for $k=2$, the size of the position space $\disc$ and the number of allowed splits increase quickly
with $T$, only moderate choices of $T$ are computationally feasible,
limiting the power of this approach drastically. Surprisingly, introducing
a simple scaling step is sufficient to overcome this problem and enables
us to find protocols that are better than the previous-best
protocol $\TwoBit$. Algorithm~\ref{alg:search} outlines our basic search procedure.

\begin{algorithm}
\begin{algorithmic}[1] 
\State{Initialize $s\equiv \succ_0$}
\Repeat
\State{\textit{Splitting Step:}}
\ForAll{$D\in \disc$ in ascending order of $\|D\|_1$}
\ForAll{relaxed splits $(D_0,D_1)\in \disc^2$ of $D$}
\If{$s(D)< s(D_0)+s(D_1)$}
\State{Update $s(D) \gets s(D_0)+s(D_1)$}
\EndIf
\EndFor
\EndFor
\State{\textit{Scaling Step:}}
\ForAll{$D \in \disc$ and all $\lambda\in\N$ with $\lambda D\in \disc$}
\If{$s(D) < \frac{s(\lambda D)}{\lambda}$}
\State{Update $s(D) \gets \frac{s(\lambda D)}{\lambda}$}
\EndIf
\EndFor
\Until{termination criterion is met}
\end{algorithmic}

\caption{Computing a lower bound on $\succ$ for the discrete set of positions
$\disc=\{0,\dots,T\}^{2 \times k}$.}\label{alg:search}
\end{algorithm}

Instead of restricting to optimize only over all allowed splits $(D_0, D_1)\in \disc^2$ of $D\in \disc$, however, we use monotonicity of the discretization to exploit even more reasonable splits. For ease of presentation, we focus here on the 2-player case (the generalization to larger values of $k$ is straightforward).
\begin{definition}
\label{def:relaxed}
We call distributions $D_0 = (a_0,b_0,c_0,d_0)\in \disc$ and $D_1=(a_1,b_1,c_1,d_1)\in \disc$ a \emph{relaxed split} of $D=(a,b,c,d)\in \disc$, if $a = a_0 + a_1$, $b=b_0+b_1$, $c=c_0+c_1$, $d_0 = \lfloor  (d/c) \cdot c_0\rfloor$ and $d_1 = \lfloor (d/c) \cdot c_1 \rfloor$.
\end{definition}
\begin{observation}\label{obs:relaxed}
Any relaxed split $(D_0,D_1)\in \disc^2$ of $D\in \disc$ yields a feasible lower bound $\succ(D) \ge \succ(D_0)+\succ(D_1)$.
\end{observation}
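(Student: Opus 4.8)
The plan is to realize the relaxed split as the component-wise lower bound of an \emph{honest} allowed split, and then quote monotonicity. So let $(D_0,D_1)$ with $D_0=(a_0,b_0,c_0,d_0)$ and $D_1=(a_1,b_1,c_1,d_1)$ be a relaxed split of $D=(a,b,c,d)$. First I would observe that $D_0+D_1$ and $D$ can only differ in the last coordinate: by definition $a_0+a_1=a$, $b_0+b_1=b$ and $c_0+c_1=c$, whereas $d_0+d_1=\lfloor\tfrac dc c_0\rfloor+\lfloor\tfrac dc c_1\rfloor\le \tfrac dc(c_0+c_1)=d$. Hence $D_0+D_1\le D$ (component-wise), with a deficit possible only in the entry $d_{12}$. (If $c=0$ then $c_0=c_1=0$ and the floors are read as $0$; I assume $c>0$ below, the degenerate case being handled either by reading everything as $0$ or by splitting on player $2$'s bit instead.)

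Next I would ``fill in'' this deficit. Set $\tilde D_0:=(a_0,b_0,c_0,\tfrac dc c_0)$ and $\tilde D_1:=(a_1,b_1,c_1,\tfrac dc c_1)$. Then $\tilde D_0+\tilde D_1=(a,b,c,\tfrac dc(c_0+c_1))=(a,b,c,d)=D$, and the restriction of $\tilde D_0$ to the coordinates $\{0,1\}\times\{2\}$ of the second player equals $(c_0,\tfrac dc c_0)=\tfrac{c_0}{c}\,(c,d)$, i.e.\ it is proportional to $D$ there with factor $\lambda=c_0/c\in[0,1]$. Thus $(\tilde D_0,\tilde D_1)$ is a $1$-allowed split of $D$ in the sense of Definition~\ref{def:allowed}, and Definition~\ref{def:succn}(ii) together with the passage to the limit $n\to\infty$ (cf.\ Lemma~\ref{lem:limsup}) gives $\succ(D)\ge\succ(\tilde D_0)+\succ(\tilde D_1)$.

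Finally I would apply Proposition~\ref{prop:mon}. Since $\tilde D_0$ and $D_0$ agree on the first three coordinates and $\tfrac dc c_0\ge\lfloor\tfrac dc c_0\rfloor=d_0$ in the last one, we have $\tilde D_0\ge D_0\ge 0$ component-wise, and likewise $\tilde D_1\ge D_1\ge 0$. Monotonicity then yields $\succ(\tilde D_0)\ge\succ(D_0)$ and $\succ(\tilde D_1)\ge\succ(D_1)$, whence $\succ(D)\ge\succ(\tilde D_0)+\succ(\tilde D_1)\ge\succ(D_0)+\succ(D_1)$, which is the claim.

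There is no real obstacle here: the whole statement is a one-line consequence of ``allowed split $\Rightarrow$ super-additivity of $\succ$'' combined with Proposition~\ref{prop:mon}. The only points deserving a moment's care are checking that $\tilde D_0$ is genuinely proportional to $D$ on the second player's coordinates (so that $(\tilde D_0,\tilde D_1)$ really qualifies as an allowed split) and treating the boundary case $c=0$.
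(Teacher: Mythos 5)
Your proof is correct and is essentially identical to the paper's: both construct the exact allowed split $(a_i,b_i,c_i,(d/c)c_i)$ dominating the relaxed split and then invoke monotonicity (Proposition~\ref{prop:mon}). Your treatment is slightly more careful (verifying proportionality explicitly and noting the degenerate case $c=0$), but the argument is the same.
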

\begin{proof}
The statement follows from noting that $\bar{D_0} := (a_0,b_0,c_0,(d/c)\cdot c_0)$ and $\bar{D_1} := (a_1,b_1,c_1, (d/c) \cdot c_1)$ yield an allowed split of $D$. Thus $\succ(D) \ge \succ(\bar{D_0}) + \succ(\bar{D_1}) \ge \succ(D_0) + \succ(D_1)$, where the last inequality follows from monotonicity (Proposition~\ref{prop:mon}).
\end{proof}

Note that while the definition relaxes an allowed split only at coordinate $d$, by symmetry of $\succ$, we obtain the same lower bound when relaxing at other coordinates. This allows us to even split distributions $D=(a,b,c,d)\in \disc$ where neither of the ratios $a/b$ and $c/d$  occur ``perfectly'' in another distribution $D'\in \disc$ by dismissing some ``vector mass'', i.e., rounding down from $d/c \cdot c_i$ to $\lfloor d/c \cdot c_i \rfloor$. Although these splits might appear inherently wasteful (as this loss can never be regained), the best protocols that we find do indeed make use of (a small number of) such relaxed splits. 

\paragraph{Implementation Details.}
Since $\succ$ is symmetric in both the secret and the player dimension, we may
assume the following standard form for distributions $D=((D[0,1],D[1,1]),\dots,(D[0,k],D[1,k]))\in\disc$
to speed up computation and reduce memory usage: $D[0,1]  \ge D[1,1]$ and $D[0,i]  \ge D[0,i+1]\quad\text{for }i=1,\dots,k-1$. More specifically, for $k=2$, we can without loss of generality even assume that $a\ge b,c,d$.

In principle, an implementation should take care to avoid propagation of floating-point rounding errors, since previously computed entries are reused heavily and identical values are regularly recalculated in a number of different ways. Instead of using interval arithmetic, however, we chose to use a simple, fast implementation ignoring potential rounding errors: This is justified by (i) our LP-based post-optimization which gives a proof of the obtained lower bound (that can be checked using an exact LP solver), hence correctness of the output remains certified, and (ii) the fact that our discretization of the search space introduces an inherent imprecision that very likely dominates the floating-point rounding errors.

The probably most desirable termination criterion is to run the search until no improvements can be found. However, when the running time becomes a bottleneck issue, we can restrict the search to a small fixed number of iterations. This is especially useful in combination with the post-optimization (as the gain in the result per iteration is decreasing for later iterations of the process, which intuitively give increasingly accurate approximations of infinite ``ideal'' protocols -- the post-optimization could potentially resolve these cyclic structures earlier in the process).

\paragraph{Results.}

The success probabilities of the protocols computed following the above approach, using different values for $T$, are given in the first line of Table~\ref{table:results}. Further results exploiting the post-optimization are given in Table~\ref{table:resultsPost} in Section~\ref{sec:main}.

\begin{table}
\begin{tabular}{l|lllll}
$T$ & 15 & 20  &  25 & 30\\
\hline
Automated search & 0.3369432925 & 0.3376146092 &  0.3379027186 & 0.3381689066 \\
Iterations & 119 & 129 & 141 & 146 \\
Constraints & 535 & 1756  & 4217 & 13958 \\
Game Positions & 394 & 1326  & 2956 & 9646 
\end{tabular}
\caption{Lower bounds $s(T, \ldots, T)/(4T)$ on $\succ(\frac{1}{4},\frac{1}{4},\frac{1}{4},\frac{1}{4})$ stemming from the automated search (line 1). Given are also the number of iterations until the automated search procedure converged, i.e., stopped finding improvements using relaxed splits or scalings, and the number of game positions and constraints that had an influence on the value of $s(T, \ldots, T)$.}
\label{table:results}
\end{table}

\subsection{Post-Optimization via Linear Programming}
\label{sec:verification}

When letting Algorithm~\ref{alg:search} also keep track of at what time which update operation was performed, this data can be used to extract strategies for the splitting game (and cryptogenographic protocols). Some care has to be taken to only extract those intermediate positions that had an influence on the final game value for the position we are interested in (see below). 

While this approach does deliver good cryptogenographic protocols, manually verifying the correctness of the updates or analyzing the structure of the underlying protocol quickly becomes a difficult task, as the size of the protocol grows rapidly.

Fortunately, it is possible to output a compact, machine-verifiable certificate for the lower bound obtained by the automated search that might even prove a better lower bound than computed: Each update step in the automated search corresponds to a valid inequality of the form $\succ(D)\ge \succ(D_0) + \succ(D_1)$, $\succ(D)\ge \succ_0(D)$ or $\lambda \cdot \succ(D) = \succ(\lambda \cdot D)$. We can extract the (sparse) set $\inequ(T,T,T,T)$ of those inequalities that lead to the computed lower bound on $\succ(T,T,T,T)$.

\paragraph{Reconstructing the Strategy.}
\label{sec:buildProof}

Memorizing the best splits found by the dynamic programming updates, it is straightforward to reconstruct the best strategy found by the automated search. For preciseness, we define the \emph{$i$-th update step} for $i=2k-1$ as the $k$-th splitting step (during the execution of Algorithm~\ref{alg:search}) and for $i=2k$ as the $k$-th scaling step, i.e., each update step is alternatingly a splitting and a scaling step. For every distribution $D$ and update step $i$, we maintain an index $L(D,i)$ defined as the last update step before and including $i$ in which $s(D)$ has been updated to a better value, or 0 if $s(D)$ has never been updated. Moreover, for every $D$ and update step $i$ in which $s(D)$ has been updated, we keep a constraint $I(D,i)$ which represents the inequality or equality used to update $s(D)$ to its best value in update step $i$. More specifically, $I(D,i)$ stores the inequality $\succ(D) \ge \succ(D_0) + \succ(D_1)$ if the update step $i$ represents a splitting step of $D$ into $(D_0,D_1)$ and  $\lambda \succ(D) = \succ(\lambda D)$ if the update step $i$ represents a scaling of $D$ by $\lambda$. Furthermore, we let $I(D,0)$ represent the inequality $\succ(D) \ge \succ_0(D)$. This gives rise to the procedure $\inequ(D,i)$ (see Algorithm~\ref{alg:ineqs}) that returns a (sparse) list of inequalities proving the lower bound computed by the automated search (after letting it run for $i$ iterations). 

\begin{algorithm}
\begin{algorithmic}[1] 
\Function{$\inequ$}{$D,i$}
\If{$L(D,i) < i$}
\State \Return $\inequ(D,L(D,i))$
\EndIf
\If {$i$ is a splitting step with $I(D,i) = "\succ(D) \ge \succ(D_0)+\succ(D_1)"$} 
  \State \Return $I(D,i) \cup \inequ(D_0,i) \cup \inequ(D_1,i)$
\ElsIf {$i$ is a scaling step with $I(D,i) = "\lambda \succ(D) =  \succ(\lambda D)"$}  
  \State \Return $I(D,i) \cup \inequ(\lambda D,i-1)$ 
\ElsIf {$i=0$ with $I(D,0) = "\succ(D) \ge  \succ_0(D)"$}
  \State \Return{$I(D,0)$}
\EndIf
\EndFunction
\end{algorithmic}
\caption{Computing the proof $\inequ(D,I)$ of the lower bound on $\succ(D)$ obtained by running the automated search for $I$ iterations.}
\label{alg:ineqs}
\end{algorithm}

\paragraph{The Linear Program.}
\label{sec:buildLP}

Consider replacing each occurrence of $\succ(D')$ in the set of inequalities $\inequ(T,T,T,T)$ found by the automated search by a variable $s_{D'}$. We obtain a system of linear inequalities $S$ that has the feasible solution $s_{D'} = \succ(D')$ (for every occurring vector $D'$). Hence in particular, the optimal solution of the linear program of \emph{minimizing $s_D$ subject to $S$} is a lower bound on $\succ(D)$. It is easy to see that this solution is at least as good as the solution stemming from the automated search alone. It can, however, even be better, in particular when a game strategy yields cyclic visits to certain positions.  Table~\ref{table:resultsPost} contains, for different values of $T$, the success probabilities found by automated search (run with a bounded iteration number of 20) and by this above linear programming approach. The table also contains the number of linear inequalities (and game positions) that were extracted from the automated search run. We observe that consistently the LP-based solution is minimally better. We also observe that the number of constraints is still moderate, posing no difficulties for ordinary LP solvers (which stands in stark contrast to feeding all relaxed splits and scalings over the complete discretization to the LP solver, which quickly becomes infeasible). 

Hence the advantage of our approach of extracting the constraints from the automated search stage is that it generates a much sparser sets of constraints that still are sufficiently meaningful. After solving the LP,  we can further sparsify this set of inequalities by deleting all inequalities that are not tight in the optimal solution of the LP, since these cannot correspond to the best splits found for the corresponding vector $D$, yielding a smaller set of relevant inequalities, which might help to analyze the structure of strong protocols.

\subsection{Our Best Protocol}
\label{sec:main}

We report the best protocol we have found using the approach outlined in the previous sections.

\begin{theorem}
In the 2-player cryptogenography problem, $\succ(\frac{1}{4},\frac{1}{4},\frac{1}{4},\frac{1}{4})\ge 0.3384736$.
\end{theorem}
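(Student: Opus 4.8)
The plan is to exhibit, as in Theorem~\ref{thm:simpleExnon-optimal}, a finite set of valid inequalities of the form $\succ(D)\ge\succ(D_0)+\succ(D_1)$ (from allowed or relaxed splits), together with base cases $\succ(D)\ge\succ_0(D)$ and scaling identities $\lambda\,\succ(D)=\succ(\lambda D)$, whose simultaneous resolution forces $\succ(\tfrac14,\tfrac14,\tfrac14,\tfrac14)\ge0.3384736$. Concretely, I would run Algorithm~\ref{alg:search} over a discretization $\disc=\{0,\dots,T\}^{2\times2}$ for a sufficiently large $T$ (the tables suggest $T$ in the several-hundred range will be needed to reach $0.3384\ldots$, since $T=30$ only gives $0.33817$), using relaxed splits (Definition~\ref{def:relaxed}, justified by Observation~\ref{obs:relaxed}) and periodic scaling steps, and let it converge or run for a bounded number of iterations. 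This yields a lower bound $s(T,T,T,T)/(4T)$ on the success probability.

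Next I would extract, via the procedure $\inequ(T,T,T,T)$ of Algorithm~\ref{alg:ineqs}, the sparse subset of constraints that actually influenced the computed value. Replacing each $\succ(D')$ by a variable $s_{D'}$ turns this into a linear program: minimize $s_{(T,T,T,T)}$ subject to the extracted inequalities and scaling equalities. Since the assignment $s_{D'}=\succ(D')$ is feasible, the LP optimum is a valid lower bound on $\succ(T,T,T,T)$, and because the constraint system may contain cyclic dependencies (a position reachable from itself through nontrivial splits, as in the $(7,7,6,4)$ example), the LP optimum can strictly exceed the raw dynamic-programming value. The claimed bound $0.3384736$ is the value of this LP optimum divided by $4T$; correctness is certified by checking feasibility/optimality of the LP with an exact solver, so one need not trust the floating-point dynamic program. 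The reported protocol uses $18248$ non-equivalent game states, which is simply the number of distinct vectors $D'$ appearing in the surviving (tight) inequalities.

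The main obstacle is purely computational rather than conceptual: the search space $\disc$ and the number of (relaxed) splits grow like $T^{8}$ and worse, so reaching a discretization fine enough to certify $0.3384736$ requires careful use of the scaling trick to simulate a finer grid inside a coarser one, aggressive use of the symmetry normal form ($a\ge b,c,d$), memory-conscious implementation, and bounding the iteration count while still feeding enough structure to the LP post-processing to let it resolve the infinite/cyclic protocols. A secondary concern is keeping the extracted constraint set small enough that an exact LP solver can verify it, which is handled by the sparsification described in Section~\ref{sec:verification} (discarding non-tight inequalities). Once these engineering hurdles are cleared, the verification itself is routine: one re-checks that every listed inequality is a genuine allowed or relaxed split (a finite arithmetic check per constraint) and that the LP solution attains the claimed objective value, after which scaling (Proposition~\ref{prop:scaling}) transfers the bound from $(T,T,T,T)$ to $(\tfrac14,\tfrac14,\tfrac14,\tfrac14)$.
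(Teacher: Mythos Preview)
Your proposal is essentially the paper's own proof: run the automated search with relaxed splits and scaling, extract the influencing inequalities via $\inequ$, and certify the bound by solving the resulting LP. One correction: you overestimate the required discretization---the paper obtains $0.3384736$ already at $T=50$ with only $20$ iterations (Table~\ref{table:resultsPost}), not $T$ in the hundreds; it is the LP post-processing, not a finer grid, that lifts the bound above the raw $T=30$ value you cite from Table~\ref{table:results}.
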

\begin{proof}
On \verb|http://people.mpi-inf.mpg.de/~marvin/verify.html|, we provide a linear program based on feasible inequalities on the discretization $\disc$ with $T=50$. To verify the result, one only has to (1) check validity of each inequality, i.e., checking whether each constraints encodes a feasible scaling, relaxed split or zero-bit success probability and (2) solve the linear program. Since we represent the distributions $D=(a,b,c,d)$ using a normal form $a\ge b,c,d$ (to break symmetries), checking validity of each splitting constraint is not completely trivial, but easy. We provide a simple checker program to verify validity of the constraints. The LP is output in a format compatible with the LP solver \texttt{lp\_solve}\footnote{\texttt{lpsolve.sourceforge.net}}.
\end{proof}

\begin{table}
\begin{tabular}{l|llllll}
$T$ & 30 & 35  &  40 & 45 & 50\\
\hline
Automated search & 0.3381086510 & 0.3381937725 & 0.3383218072 & 0.3383946540  & 0.3384414508 \\
LP solution & 0.3381527322 & 0.3382301900 & 0.3383547901 & 0.3384303130 & 0.3384736461 \\
Iterations & 20 & 20 & 20 & 20 & 20  \\
Constraints & 5373 & 8882 & 12410 & 18659 & 24483 \\
Game Positions & 4126 & 6789 & 9396 & 13992 & 18248
\end{tabular}
\caption{Lower bounds $s(T, \ldots, T)/(4T)$ on $\succ(\frac{1}{4},\frac{1}{4},\frac{1}{4},\frac{1}{4})$ stemming from the automated search only (line 1) and from the LP solution of the linear system extracted from the automated search data, when the number of iterations is restricted to 20.} 
\label{table:resultsPost}
\end{table}

\section{A Stronger Hardness Result}\label{sec:2playerHardness}

In this section, we prove that any $2$-player cryptogenographic protocol has a success probability of at most $0.3672$. This improves over the previous $0.375$ bound of~\cite{BrodyJSW14}.

\begin{theorem}\label{thm:hardness}
We have $\succ(\frac{1}{4},\frac{1}{4},\frac{1}{4},\frac{1}{4}) \le \frac{47}{128} = 0.367188.$
\end{theorem}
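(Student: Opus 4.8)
The plan is to establish the hardness bound via the \emph{concavity method} alluded to throughout the excerpt: one exhibits a function $f: \dists_2 \to \R$ that (i) dominates the $0$-round payoff pointwise, i.e.\ $f(D) \ge p(D) = \succ_0(D)$ for all $D$, and (ii) is ``concave along allowed splits'' in the sense that whenever $(D_0, D_1)$ is an allowed split of $D$, we have $f(D) \ge f(D_0) + f(D_1)$ (using the homogeneous, un-normalized scaling so that $f$ behaves additively rather than as a convex combination). Properties (i) and (ii) together force $\succ_n(D) \le f(D)$ for all $n$ by the obvious induction on $n$ (base case is (i), inductive step is (ii) applied to the optimal split), hence $\succ(D) \le f(D)$; evaluating at the uniform distribution then gives the theorem provided $f(\tfrac14,\tfrac14,\tfrac14,\tfrac14) = \tfrac{47}{128}$. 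This is the same overall strategy as in \cite{BrodyJSW14}, but with a sharper choice of certificate function $f$ (the excerpt promises ``a different application of the concavity method'').

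The key steps, in order, would be: first, \textbf{define the candidate function} $f$ on $\R_{\ge 0}^{2\times 2}$ explicitly — presumably a piecewise-linear (or piecewise low-degree) homogeneous function, perhaps the minimum or maximum of finitely many linear forms in $a,b,c,d$, chosen so that it equals $\min\{a,c\}+\min\{b,d\}$ on the ``easy'' region (so that Propositions~\ref{prop:zero}, \ref{prop:large}, \ref{prop:concUB} are recovered) but is strictly smaller than the trivial $\|D\|_1$ bound near the uniform point. Second, \textbf{verify the domination property} $f(D) \ge \succ_0(D) = \max\{\min\{a,c\},\min\{b,d\}\}$; with a piecewise-linear $f$ this reduces to checking finitely many linear inequalities on finitely many polyhedral cells. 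Third, and this is the crux, \textbf{verify concavity along allowed splits}: fix $j \in \{1,2\}$ (by symmetry, say $j=2$, so $D_0, D_1$ are proportional to $D$ on the coordinates of player~$1$), parametrize the one-parameter family of $2$-allowed splits, and show $f(D) - f(D_0) - f(D_1) \ge 0$ along the whole segment. Because $f$ is piecewise linear and the split family is a line segment in $\R_{\ge 0}^{2\times 2}$, the difference is piecewise linear in the split parameter, so it suffices to check the inequality at the breakpoints (where either $D_0$ or $D_1$ crosses a cell boundary of $f$) and at the segment endpoints — again a finite check, though a somewhat involved case analysis over which cells the three points $D, D_0, D_1$ land in. Fourth, \textbf{evaluate} $f$ at $(\tfrac14,\tfrac14,\tfrac14,\tfrac14)$, scale up to $(1,1,1,1)$ or $(2,2,2,2)$ to get clean integers, and confirm the value is $\tfrac{47}{128}$; then invoke Lemma~\ref{lem:lim-n-bit} to translate the game-value bound back into the success-probability statement.

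The main obstacle I anticipate is \textbf{choosing the right $f$ and pushing through the concavity-along-splits verification}. Unlike ordinary concavity, here concavity is required only along the curved/affine family of allowed splits through each point, and the allowed-split family depends on which player $j$ acts and on the current ratios in $D$ — so the certificate $f$ must be simultaneously ``concave enough'' in all these directions without being so small that it dips below $\succ_0$ somewhere. Getting a function that threads this needle at exactly $\tfrac{47}{128}$ is the delicate part; once the function is pinned down, the remaining verifications are finite polyhedral case checks that are routine but tedious. I would expect the proof to proceed by first reducing, via scaling (Proposition~\ref{prop:scaling}) and the symmetries of $\succ$ in the secret bit and in the players, to a normalized region (e.g.\ $a \ge b$, $a \ge c \ge d$ or similar), which cuts the number of cases, and then to handle the $j=1$ and $j=2$ allowed-split families separately, exploiting that in each family two of the four coordinates scale together. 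A secondary subtlety is making sure the induction $\succ_n \le f$ is stated for the \emph{unnormalized} vectors so that additivity (not convex combination) is what one needs from $f$ — this is exactly why the vector-splitting reformulation of Section~\ref{sec:vectorSplitting} is convenient here.
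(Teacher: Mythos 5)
Your overall framework is the right one and matches the paper's: a certificate function that dominates $\succ_0$ and is superadditive under allowed splits yields $\succ_n \le f$ by induction, and the vector-splitting/homogeneous formulation is indeed the convenient setting. But the proposal has a genuine gap: you never exhibit the certificate function, and that function is the entire content of the proof. What you describe is a search strategy, not a proof, and you acknowledge as much (``choosing the right $f$ \dots is the delicate part''). Moreover, your anticipated shape for the certificate --- piecewise linear, with concavity verified by checking breakpoints on polyhedral cells --- is not what works here. The paper's certificate is the smooth quartic
\[
s(a,b,c,d)=\frac{1-f(a,b,c,d)}{4},\qquad f(a,b,c,d)=a^2+b^2+c^2+d^2-6(ac+bd)+8abcd ,
\]
i.e.\ the certificate of Brody et al.\ sharpened by the extra term $8abcd$. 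Concavity on each allowed plane is verified not by a finite breakpoint check but by restricting $f$ to the plane (parametrized by the fixed ratio $q=d/c$), computing the Hessian of the resulting two-variable function $f_q$, and proving positive semidefiniteness via nonnegativity of the principal minors; this reduces to polynomial inequalities $p_2,p_3,p_4\ge 0$ on the simplex, proved by elementary estimates. A piecewise-linear certificate would have a hard time beating $3/8$, since the known extremal certificates in this problem are genuinely curved.

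A second point you miss, which is what makes the sharper certificate findable at all: the paper does \emph{not} verify the domination condition $s\ge\succ_0$ pointwise as you propose. Instead it proves an adapted concavity lemma showing that condition (C1) together with the values of $s$ at just six degenerate distributions (the four unit vectors and $(\tfrac12,0,\tfrac12,0)$, $(0,\tfrac12,0,\tfrac12)$) already implies $s\ge\succ_0$ everywhere, by exhibiting explicit allowed splits that decompose an arbitrary $D$ into pieces supported on those six points. The added term $8abcd$ vanishes at all six, so the domination check is immediate, and all the work is concentrated in the Hessian computation. Without this relaxation, verifying global domination for a non-trivial smooth candidate would be an additional, nontrivial polynomial-inequality problem.
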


To prove the result, we apply the concavity method used by Brody et al.~\cite{BrodyJSW14} which consists of finding a function $s$ that (i) is lower bounded by $\succ_0$ for all distributions and (ii) satisfies a certain concavity condition. We first relax the lower bound requirement to hold only for six particular simple distributions (namely $(1,0,0,0),\dots,(0,0,0,1),(\frac{1}{2},0,\frac{1}{2},0)$ and $(0,\frac{1}{2},0,\frac{1}{2})$) instead of all distributions. This simplifies the search for a suitable stronger candidate function satisfying (i) - it remains to verify condition (ii) for the thus found candidate function. 

We first describe the previously used concavity method in Section~\ref{sec:concavityMethod} and apply it to our new upper bound function in Section~\ref{sec:adapted}.

\subsection{Revisiting the concavity method}
\label{sec:concavityMethod}

We revisit the original result of Brody et al.~\cite[Theorem 4.4]{BrodyJSW14} establishing the concavity method for the cryptogenography problem. Similar concavity arguments have appeared earlier in different settings in information complexity and information theory (e.g.,~\cite{BravermanGPW13}).
In what follows, we mostly use the language of convex combinations of secret distributions in allowed planes. To this aim, let $\dists := \{ D \in \R^{2\times 2}_{\ge 0} \mid \|D\|_1 = 1 \}$ denote the set of these distributions. For a given distribution $D = (a,b,c,d) \in \dists$, there are two \emph{allowed planes} through $D$, namely $\{(a',b',\delta c, \delta d) \mid a', b', \delta \in \R\} \cap \dists$ and $\{(\delta a, \delta b, c', d') \mid c', d', \delta \in \R\} \cap \dists$. 

\begin{lemma}[Concavity Method, {\cite[Theorem 4.4]{BrodyJSW14}}]
\label{lem:oldConcavity}Let $s:\dists\to\R$ satisfy
\begin{enumerate}
\item[(C1)] $s(D)\ge \lambda s(D_{0})+ (1-\lambda) s(D_{1})$ for all $\lambda \in [0,1]$ and all $D_0, D_1$ such that $D = \lambda D_0 + (1-\lambda) D_1$ and $D_0, D_1$ lie in the same allowed plane through $D$.
\item[(C2)] $s(D)\ge \succ_0(D)$ for all $D \in \dists$.
\end{enumerate}
Then $\succ(D)\le s(D)$ holds for all $D\in\dists$.
\end{lemma}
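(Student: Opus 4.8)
The plan is to prove, by induction on $n \in \N$, the strengthening that $\succ_n(D) \le s(D)$ for every distribution $D \in \dists$; the lemma then follows at once, since $\succ(D) = \sup_n \succ_n(D)$ by Lemma~\ref{lem:limsup}. The base case $n = 0$ is precisely hypothesis (C2), so nothing needs to be done there.

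For the inductive step I would fix $n \ge 1$, assume $\succ_{n-1} \le s$ on $\dists$, and let $(D_0, D_1)$ be an arbitrary allowed split of $D$; by clause (ii) of Definition~\ref{def:succn} it suffices to bound $\succ_{n-1}(D_0) + \succ_{n-1}(D_1)$ by $s(D)$ for each such split. If $D_0 = 0$ or $D_1 = 0$ the split is trivial and the bound is immediate from the inductive hypothesis applied to $D$ itself (using $\succ_{n-1}(0) = 0$). Otherwise I set $\lambda := \|D_0\|_1 \in (0,1)$, so that $\|D_1\|_1 = 1 - \lambda$, and pass to the normalized distributions $\bar D_0 := D_0/\lambda$ and $\bar D_1 := D_1/(1-\lambda)$, which satisfy $D = \lambda \bar D_0 + (1-\lambda)\bar D_1$. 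The key point is that the proportionality condition defining a $j$-allowed split (Definition~\ref{def:allowed}) says exactly that $\bar D_0$ and $\bar D_1$ both lie in the allowed plane through $D$ that fixes the ratios on $\{0,1\} \times ([k]\setminus\{j\})$ — this is the same correspondence between allowed splits and convex combinations in a common allowed plane that underlies Lemma~\ref{lem:lim-n-bit} and the reformulation of Brody et al.\ recalled in Section~\ref{sec:intuition}. Then scalability (Proposition~\ref{prop:scaling}) gives $\succ_{n-1}(D_0) = \lambda\,\succ_{n-1}(\bar D_0)$ and $\succ_{n-1}(D_1) = (1-\lambda)\,\succ_{n-1}(\bar D_1)$, and combining the inductive hypothesis with hypothesis (C1) yields
\[
\succ_{n-1}(D_0) + \succ_{n-1}(D_1) = \lambda\,\succ_{n-1}(\bar D_0) + (1-\lambda)\,\succ_{n-1}(\bar D_1) \le \lambda\, s(\bar D_0) + (1-\lambda)\, s(\bar D_1) \le s(D).
\]
Maximizing over all allowed splits $(D_0, D_1)$ of $D$ then gives $\succ_n(D) \le s(D)$, closing the induction.

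I expect the only genuinely delicate part to be this translation step — verifying that, after normalization, the algebraic proportionality of Definition~\ref{def:allowed} is equivalent to ``$\bar D_0, \bar D_1$ lie in a common allowed plane through $D$'' — the nuisance cases being those in which some coordinate of $D$ on $\{0,1\}\times([k]\setminus\{j\})$ vanishes (so that the proportionality factor is not uniquely pinned down) or $\lambda \in \{0,1\}$, which I dispatched separately above. Everything else is a routine induction. In fact one could avoid re-deriving the argument entirely: the lemma is exactly \cite[Theorem 4.4]{BrodyJSW14} transported through the identity $\succ_n = s_n$ on distributions established in Lemma~\ref{lem:lim-n-bit}.
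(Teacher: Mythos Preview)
Your proof is correct. The paper itself does not prove this lemma --- it is simply quoted as \cite[Theorem~4.4]{BrodyJSW14} and used as a black box, precisely as you observe in your final sentence --- so there is nothing to compare against beyond noting that your induction on $n$ is the standard argument and that the translation step you flag (from a $j$-allowed split of $D$ to normalized distributions $\bar D_0,\bar D_1$ lying in a common allowed plane through $D$) goes through without difficulty.
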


To transfer this method to the vector splitting formulation, consider an $s':\R_{\ge 0}^{2\times 2}\to \R$, which is scalable (i.e., $s'(\lambda D)=\lambda s'(D)$ for all $\lambda \in \R_{\ge 0}, D \in \R_{\ge 0}^{2\times 2}$) and when restricted to $\dists$ equals $s$. Then condition (C1) is equivalent to $s'(D) \ge s'(D_{0}) + s'(D_1)$ for all allowed splits $(D_0,D_1)$ of $D$, i.e., superadditivity of $s'$ for all allowed splits. Condition (C2) remains effectively the same: $s'(D)\ge \succ_0(D)$ for all $D\in \R^{2\times 2}_{\ge 0}$. If these conditions hold, we obtain $\succ(D) \le s'(D)$ for all $D\in \R^{2\times 2}_{\ge 0}$, which is equivalent to $\succ(D) \le s(D)$ for all $D\in \dists$.

As a simple application of the concavity method, we can now give the simple proof of Proposition~\ref{prop:concUB} stated in Section~\ref{sec:usefulFacts}.

\concUB*
\begin{proof}
We make use of the vector splitting formulation. Define  $s_{UB}(a,b,c,d):= \min\{a,c\}+\min\{b,d\}$. We have \[\succ_{0}(a,b,c,d)=\max\{\min\{a,c\},\min\{b,d\}\}\le \min\{a,c\}+\min\{b,d\} = s_{UB}(a,b,c,d),\]
which proves condition (C2) of Lemma~\ref{lem:concavity}.
Note that $f:(x,y)\mapsto\min\{x,y\}$ is superadditive and hence
$s_{UB}$, as a sum of superadditive functions,  is superadditive
as well. This proves $s_{UB}(D)\ge s_{UB}(D_{0})+s_{UB}(D_{1})$ even for
\emph{all} splits $D=D_{0}+D_{1}$ (not only allowed splits).
\end{proof}

The following lemma is an extension of the concavity theorem. We relax the condition that $s$ is lower bounded by $\succ_0$ on \emph{all} distributions to now on only six particular, very simple distributions.

\begin{lemma}[Concavity Method, adapted]
\label{lem:concavity}Let $s:\dists\to\R$ satisfy
\begin{enumerate}
\item[(C1)] $s(D)\ge \lambda s(D_{0})+ (1-\lambda) s(D_{1})$ for all $\lambda \in [0,1]$ and all $D_0, D_1$ such that $D = \lambda D_0 + (1-\lambda) D_1$ and $D_0, D_1$ lie in the same allowed plane through $D$.
\item[(C2')] 
\renewcommand\labelitemi{\tiny$\bullet$}
\begin{itemize}
\item $s(1/2,0,1/2,0),s(0,1/2,0,1/2) \ge 1/2$, and
\item $s(1,0,0,0), s(0,1,0,0), s(0,0,1,0),s(0,0,0,1) \ge 0$.
\end{itemize}
\end{enumerate}
Then $\succ(D)\le s(D)$ holds for all $D\in\dists$.
\end{lemma}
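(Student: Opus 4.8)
The goal is to reduce the adapted concavity method (Lemma~\ref{lem:concavity}) to the original one (Lemma~\ref{lem:oldConcavity}). The plan is to show that any $s$ satisfying (C1) and the weaker condition (C2') automatically satisfies the full condition (C2), i.e., $s(D) \ge \succ_0(D)$ for \emph{all} distributions $D \in \dists$; then Lemma~\ref{lem:oldConcavity} applies verbatim and yields $\succ(D) \le s(D)$ for all $D$. So the entire content is the implication (C1) $\wedge$ (C2') $\Rightarrow$ (C2).

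\textbf{Key steps.} Fix an arbitrary distribution $D = (a,b,c,d) \in \dists$. By the symmetry of $\succ_0$ in the secret bit, it suffices to treat the case $\succ_0(D) = \min\{a,c\}$ (the case $\min\{b,d\}$ is symmetric), so the target is $s(D) \ge \min\{a,c\}$; set $m := \min\{a,c\}$. The idea is to exhibit $D$ as an iterated convex combination, inside allowed planes, of the six special distributions of (C2'), and then apply (C1) repeatedly. Concretely, I would first split off the "diagonal mass": write $(a,b,c,d)$ as a convex combination of $(\tfrac12,0,\tfrac12,0)$ (carrying weight $2m$) and a remainder $D'$, where $D' = \tfrac{1}{1-2m}(a-m, b, c-m, d)$; one checks $D = 2m \cdot (\tfrac12,0,\tfrac12,0) + (1-2m) D'$, that this lies in a common allowed plane (both summands, and $D$, have first/second coordinates and third/fourth coordinates in fixed proportion appropriately — here the key point is that $(a-m,b)$ is proportional to... ), and that $D'$ has at least one of its first or third coordinate equal to zero (namely whichever of $a,c$ was the minimum). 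Then $D'$ is a distribution supported on at most three of the four coordinates, and such a $D'$ can be written, again via splits in allowed planes, as a convex combination of the standard basis vectors $(1,0,0,0),\dots,(0,0,0,1)$ — each of which has $s$-value $\ge 0$ by (C2'). Applying (C1) along this decomposition tree gives $s(D) \ge 2m \cdot s(\tfrac12,0,\tfrac12,0) + (1-2m)\cdot s(D') \ge 2m \cdot \tfrac12 + (1-2m)\cdot 0 = m = \succ_0(D)$, which is exactly (C2).

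\textbf{Main obstacle.} The delicate point is verifying that each split used in the decomposition genuinely lies in a \emph{common allowed plane} through the distribution being split — recall an allowed plane only permits scaling one of the two player-blocks $\{(a,b)\}$ or $\{(c,d)\}$ while freely moving the other. So when I peel off $(\tfrac12,0,\tfrac12,0)$ from $(a,b,c,d)$, I need the remainder and the peeled-off vector to differ from $D$ only by a rescaling of one block; since $(\tfrac12,0,\tfrac12,0)$ has second and fourth coordinates both zero, this forces $b$ and $d$ to be handled in the block that is \emph{not} rescaled, which works out only if $b$ and $d$ are in fixed ratio across the split — true here since one summand has $b=d=0$. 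One must be careful to order the peeling operations (first reduce to a three-coordinate support, then to two, then to one) so that at each stage one block is genuinely proportional. I would organize this as two or three successive applications of (C1), checking at each step which coordinate has become zero so that the allowed-plane condition is met; handling the degenerate cases where $a=c$, or $b=0$ or $d=0$ from the start, or $2m=1$, separately (these are immediate). This bookkeeping — rather than any hard inequality — is where the real work lies, but it is entirely elementary.
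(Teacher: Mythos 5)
Your overall strategy is exactly the paper's: show that (C1) together with (C2') already implies the full lower-bound condition (C2), and then invoke the original concavity method (Lemma~\ref{lem:oldConcavity}). However, the specific decomposition in your ``Key steps'' has a genuine flaw: the very first split, $D = 2m\cdot(\tfrac12,0,\tfrac12,0) + (1-2m)D'$ with $D'=\tfrac{1}{1-2m}(a-m,b,c-m,d)$, is in general \emph{not} a convex combination inside an allowed plane through $D$. An allowed plane through $D=(a,b,c,d)$ is either $\{(a',b',\delta c,\delta d)\}\cap\dists$ or $\{(\delta a,\delta b,c',d')\}\cap\dists$; for $(\tfrac12,0,\tfrac12,0)$ to lie on the first you would need $(\tfrac12,0)=\delta(c,d)$, i.e.\ $d=0$, and for the second you would need $b=0$. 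For a generic $D$ with $b,d>0$ the distribution $(\tfrac12,0,\tfrac12,0)$ lies on neither allowed plane, so (C1) cannot be applied to this split. Your justification in the ``Main obstacle'' paragraph --- that the proportionality ``works out \ldots since one summand has $b=d=0$'' --- is exactly backwards: the condition requires the \emph{inactive player's pair of coordinates} in each summand to be proportional to the corresponding pair in $D$, and a summand with a zero where $D$ has a nonzero entry in that pair breaks proportionality rather than trivially satisfying it (unless the whole pair is scaled by $\lambda=0$).

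The fix is to reverse your peeling order, which is what the paper does: first split off the off-diagonal mass, $D=(a,0,c,d)+(0,b,0,0)$ (a $1$-allowed split, since the player-$2$ block of the second summand is $0\cdot(c,d)$), then $(a,0,c,d)=(a,0,c,0)+(0,0,0,d)$ (a $2$-allowed split); the stripped-off pieces are scaled basis vectors with $s$-value $\ge 0$ by (C2') and scalability. Only \emph{then} split $(a,0,c,0)$ into $m\cdot(1,0,1,0)$ plus a single remaining basis vector --- at that point both blocks of both summands are genuinely proportional to those of $(a,0,c,0)$, so the split is allowed, and (C2') gives the contribution $2m\cdot s(\tfrac12,0,\tfrac12,0)\ge m$. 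Your closing remark that one should ``first reduce to a three-coordinate support, then to two, then to one'' points toward this correct order, but it contradicts the order actually carried out in your key steps, and as written the argument does not go through.
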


\begin{proof}
To appeal to Lemma~\ref{lem:oldConcavity}, we need to show that (C1) and (C2') imply (C2), i.e., for all $D=(a,b,c,d) \in \dists$, we have $s(D)\ge \succ_0(D) = \max\{\min\{a,c\},\min\{b,d\} \}$. 

To prove this statement, we again  find it more convenient to use the vector splitting formulation. To this aim, we extend $s$ to $s': \R^{2\times 2}_{\ge 0} \to \R$ by defining $s'(D) := \|D\|_1\cdot s(\frac{D}{\|D\|_1})$. Recall that in this view (C1) is equivalent to $s'(D) \ge s'(D_0) + s'(D_1)$ for all allowed splits $(D_0,D_1)$ of $D$. Assume that $\min \{a,c\} \ge \min\{b,d\}$ (the other case is symmetric). Consider $D_0 := (a,0,c,d)$, $D_1 := (0,b,0,0)$, which is a $1$-allowed split of $D$. By scaling and (C2'), we have $s'(D_1)=b\cdot s(0,1,0,0)\ge 0$. We split $D_0$ into $E_{0}:=(a,0,c,0)$ and $E_{1}:=(0,0,0,d)$, which is a $2$-allowed split. Again, by scaling and (C2'), we obtain $s'(E_1)\ge 0$. Assuming that $a\ge b$ (since the other case is symmetric), we finally split $E_0$ into $F_0 := (b,0,b,0)$ and $F_1:=(a-b,0,0,0)$, which is a $1$-allowed split that satisfies, by scaling and (C2'), $s'(F_1)\ge 0$ and $s'(F_0) = 2b \cdot s(\frac{1}{2},0,\frac{1}{2},0)\ge b$. Thus, $s(D)=s'(D)\ge b = \succ_0(D)$.
\end{proof}

\subsection{The adapted upper bound function}
\label{sec:adapted}

Motivated by our relaxation of the condition of the concavity method, we adapt the upper bound function of Brody et al.~\cite{BrodyJSW14} and set
\begin{eqnarray*}
s(a,b,c,d) & := & \frac{1-f(a,b,c,d)}{4},\\
f(a,b,c,d) & := & a^2 + b^2 + c^2 + d^2 - 6(ac+bd) + 8abcd.
\end{eqnarray*}
Note that we have changed the upper bound function by introducing an additional term of $8abcd$, which attains a value of zero on the distributions $(\frac{1}{2},0,\frac{1}{2},0)$, $(1,0,0,0)$, etc., thus not affecting the zero-bit success probability condition of the concavity method. Additionally, this function also satisfies the concavity condition, which we will prove later.

\begin{lemma}
\label{lem:sIsConcave}
The thus defined $s$ satisfies the concavity condition (C1) of Lemma~\ref{lem:concavity}.
\end{lemma}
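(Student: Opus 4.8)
The plan is to verify condition (C1) directly, i.e., to show that for every allowed split $(D_0,D_1)$ of an arbitrary $D\in\R_{\ge0}^{2\times2}$ we have $s'(D)\ge s'(D_0)+s'(D_1)$, where $s'$ is the scalable extension $s'(a,b,c,d)=\tfrac14\bigl(\|D\|_1^2 - g(a,b,c,d)\bigr)$ with $g$ the homogenized version of $f$ (so $g(a,b,c,d)=a^2+b^2+c^2+d^2-6(ac+bd)+\tfrac{8abcd}{a+b+c+d}$, matching $f$ on $\dists$). Since the linear part $\|D\|_1^2$ is exactly additive across any split ($\|D\|_1^2\ge\|D_0\|_1^2+\|D_1\|_1^2$ is what we actually need, and in fact $(x+y)^2\ge x^2+y^2$ for $x,y\ge0$), the real content is a \emph{subadditivity} statement for $g$ restricted to allowed splits: $g(D)\le g(D_0)+g(D_1)$. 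By the symmetry of $g$ under swapping the two players and under swapping the two secret values, it suffices to treat a $1$-allowed split, say $D_0=(a_0,b_0,\lambda c,\lambda d)$ and $D_1=(a-a_0,b-b_0,(1-\lambda)c,(1-\lambda)d)$.

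First I would reduce to the two-dimensional picture: fix the ``$c,d$'' block up to the scalar $\lambda$, and observe that along an allowed split only the first-player coordinates $(a,b)$ get split freely while $(c,d)$ is only scaled. So I would parametrize $D=\lambda D_0+(1-\lambda)D_1$ in the allowed plane and rewrite the inequality $s(D)\ge\lambda s(D_0)+(1-\lambda)s(D_1)$ as concavity of the one-variable restriction of $s$ to the segment $[D_0,D_1]$. Concretely, write $\varphi(t):=s(D_0+t(D_1-D_0))$ for $t\in[0,1]$ and show $\varphi''(t)\le0$; equivalently, show that the Hessian of $s$ (or of $f$), when restricted to the two-dimensional tangent space of an allowed plane, is negative semidefinite. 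This is the standard ``allowed-plane concavity'' reformulation, and it turns the claim into a finite computation: compute $\nabla^2 f$, restrict to the subspace spanned by the allowed directions (for a $1$-allowed plane these are $\partial_a$, $\partial_b$, and the radial direction $(0,0,c,d)$), and check the $2\times2$ (or $3\times3$ with the homogeneity direction) principal minors have the right signs on $\dists$. The added term $8abcd$ contributes $\nabla^2(8abcd)$ which has entries like $8cd$, $8d$ on the relevant block, and one must check it does not destroy the negative-semidefiniteness that the quadratic part $a^2+b^2+c^2+d^2-6(ac+bd)$ already enjoys on allowed planes.

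The cleanest route, which I expect the authors to take, is probably to avoid Hessians and instead verify the inequality $g(D)\le g(D_0)+g(D_1)$ as a polynomial identity/inequality after clearing denominators. Writing $a=a_0+a_1$, $b=b_0+b_1$, $c=c_0+c_1$ with $c_i=\lambda_i c$ and $d_i=\lambda_i d$ (so $\lambda_0+\lambda_1=1$, $\lambda_i\ge0$), the difference $g(D_0)+g(D_1)-g(D)$ should factor or reduce to a manifestly nonnegative expression — for the quadratic part one gets $2a_0a_1+2b_0b_1-6(a_0c_1+a_1c_0\cdot(\text{stuff}))$ type cross terms that rearrange into something like $2(a_0-3\lambda_1\cdot\frac{?}{?})^2+\dots$, and the $abcd$ term contributes the delicate piece. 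I would compute $g(D_0)+g(D_1)-g(D)$ symbolically, substitute the proportionality relations to eliminate $c_i,d_i$ in favor of $\lambda_i,c,d$, and hope it collapses to a sum of squares (possibly times $\lambda_0\lambda_1\ge0$) or a product of obviously nonnegative factors on $\dists$.

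The main obstacle will be the interaction of the genuinely ``bad'' quadratic cross-term $-6(ac+bd)$ (which alone makes $g$ fail to be subadditive over \emph{all} splits — the restriction to allowed splits is essential) with the new quartic term $8abcd$, whose homogenization introduces the denominator $a+b+c+d$ and so forces one to multiply through and handle a higher-degree polynomial inequality constrained to $\|D\|_1=1$. Ensuring the sum-of-squares decomposition genuinely holds for \emph{all} valid $\lambda\in[0,1]$ and all nonnegative coordinates — rather than only in a neighborhood — is where care is needed; if a clean SOS form is not forthcoming, the fallback is the Hessian computation above, verifying that the relevant $2\times2$ minor of $\nabla^2 f$ restricted to an allowed plane is $\le0$ on the simplex, which is a bounded computation but may require splitting into cases according to which of $a,b,c,d$ are small.
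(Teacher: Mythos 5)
Your fallback route is the one the paper actually takes: fix the ratio $q=d/c$ (which a $1$-allowed split preserves), parametrize the allowed plane inside $\dists$ by $(a,b)$ via $f_q(a,b)=f\bigl(a,b,\tfrac{1}{1+q}(1-a-b),\tfrac{q}{1+q}(1-a-b)\bigr)$, and show $f_q$ is convex by checking that the $2\times2$ Hessian has nonnegative principal minors on the simplex. But your proposal stops exactly where the proof begins. The entire content of the lemma is the verification of those minors: the diagonal entry reduces to $4+4p_1(a,b)q+q^2$ with $p_1\ge-1$, hence $\ge(q-2)^2\ge0$, and the determinant reduces to $\tfrac{64q}{(1+q)^4}\bigl(p_2+p_3q+p_4q^2\bigr)$ where $p_2,p_4$ are quadratics and $p_3$ is a quartic in $(a,b)$ whose nonnegativity on $\{a,b\ge0,\ a+b\le1\}$ requires a careful chain of substitutions (e.g.\ $a^4+b^4\le a^3(1-b)+b^3(1-a)$) ending in the univariate bound $12s-23s^2+11s^3\ge0$ on $[0,1]$ — an inequality that is \emph{tight} at both endpoints. "A bounded computation that may require splitting into cases" does not discharge this; because of the tightness, crude case analysis or interval bounds will not close it, and you offer no evidence that the inequality holds at all (which is the whole point: the added $8abcd$ term could a priori have destroyed convexity).

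Your preferred primary route also has a setup error. The scalable extension of $s$ is $s'(D)=\|D\|_1\,s(D/\|D\|_1)$, which is positively homogeneous of degree $1$; only for \emph{that} extension is superadditivity over allowed splits equivalent to (C1). Your $s'(D)=\tfrac14\bigl(\|D\|_1^2-g(D)\bigr)$ with $g=\mathrm{quad}+\tfrac{8abcd}{a+b+c+d}$ agrees with $s$ on $\dists$ but is not degree-$1$ homogeneous (its pieces have degrees $2$, $2$ and $3$). Writing $E_0=\lambda D_0$, $E_1=(1-\lambda)D_1$, condition (C1) asks for $s(D)\ge\lambda s(D_0)+(1-\lambda)s(D_1)$, whereas your superadditivity claim gives $s(D)\ge\tfrac14(\lambda^2-g(\lambda D_0))+\tfrac14((1-\lambda)^2-g((1-\lambda)D_1))$; since $g(\lambda D_0)=\lambda^2\,\mathrm{quad}(D_0)+\lambda^3\cdot 8a_0b_0c_0d_0\ne\lambda f(D_0)-\lambda(1-\lambda)$ in general, the right-hand sides do not compare in the useful direction (e.g.\ near the uniform distribution). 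So even if you established both $\|D\|_1^2\ge\|D_0\|_1^2+\|D_1\|_1^2$ and subadditivity of $g$, you would not have proved (C1). The Hessian route avoids this entirely by never leaving the simplex, which is one reason the paper works there.
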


As an immediate consequence of the concavity method of Lemma~\ref{lem:concavity}, we obtain the upper bound of $\succ(\frac{1}{4},\frac{1}{4},\frac{1}{4},\frac{1}{4}) \le \frac{47}{128}$ of Theorem~\ref{thm:hardness}.

\begin{proof}[Proof of Theorem~\ref{thm:hardness}]
Simple calculations show that $s$ satisfies $s(1,0,0,0)=0$ and $s(1/2,0,1/2,0)=1/2$, which by symmetry implies that $s$ satisfies condition (C2'). Since additionally condition (C1) is satisfied by Lemma~\ref{lem:sIsConcave}, the concavity method (Lemma~\ref{lem:concavity}) is applicable and yields $\succ(\frac{1}{4},\frac{1}{4},\frac{1}{4},\frac{1}{4}) \le s(\frac{1}{4},\frac{1}{4},\frac{1}{4},\frac{1}{4}) = \frac{47}{128}$.
\end{proof}

In the remainder of this section, we will show that $s$ does not violate concavity on allowed planes (condition (C1)), i.e., prove Lemma~\ref{lem:sIsConcave}. To this end,  we exclusively use the convex combination view, in which $\|D_0\|_1 = \|D_1\|_1=\|D\| = 1$.

For $q\ge 0$, we define $f_q : S \to \R$, where $S:=\{(a,b)\mid 0\le a,b \le 1, a+b\le 1\}$, by
\[
f_q(a,b):= f\left(a,b,\frac{1}{1+q}(1-a-b),\frac{q}{1+q}(1-a-b)\right),
\]
i.e., for fixed $q\ge 0$, this function maps each pair $(a,b)\in S$ to the value of $f$ at the unique distribution $(a',b',c',d')$ with $a'=a$, $b'=b$ and $d'/c' = q$.
\begin{lemma}
\label{lem:ifconcave}
If $f_q$ is convex for all $q\ge 0$, then $s$ satisfies the concavity condition (C1) of Lemma~\ref{lem:concavity}.
\end{lemma}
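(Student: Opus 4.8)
The plan is to reduce the concavity condition (C1) — superadditivity of $s'$ on allowed splits, equivalently concavity of $s$ along allowed planes — to the one-variable convexity statement about $f_q$. Recall that an allowed plane through a distribution $D$ fixes the ratio $d'/c' = q$ on the unconstrained pair of coordinates (or, symmetrically, the ratio $b'/a'$), while the other two coordinates $(a',b')$ may vary subject only to $a',b'\ge 0$ and $a'+b'\le 1$; the remaining mass $1-a'-b'$ is then distributed between $c'$ and $d'$ in the fixed proportion $1:q$. Thus, up to the symmetry swapping the roles of the two players, every allowed plane is exactly the image of the map $(a,b)\mapsto\bigl(a,b,\tfrac{1}{1+q}(1-a-b),\tfrac{q}{1+q}(1-a-b)\bigr)$ appearing in the definition of $f_q$, and this map is affine in $(a,b)$.

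First I would make this identification precise: fix an allowed plane, say one in which $d'/c'=q$ is constant, and let $\phi_q\colon S\to\dists$ be the affine parametrization above. Since $s = (1-f)/4$, concavity of $s$ restricted to this plane is equivalent to convexity of $f\circ\phi_q = f_q$ on $S$. Condition (C1) asks that $s(D)\ge \lambda s(D_0)+(1-\lambda)s(D_1)$ whenever $D=\lambda D_0+(1-\lambda)D_1$ with $D_0,D_1$ in a common allowed plane through $D$; writing $D_i=\phi_q((a_i,b_i))$ and using that $\phi_q$ is affine, we get $D=\phi_q(\lambda(a_0,b_0)+(1-\lambda)(a_1,b_1))$, so the inequality is precisely $f_q(\lambda u_0+(1-\lambda)u_1)\le \lambda f_q(u_0)+(1-\lambda)f_q(u_1)$ for $u_i=(a_i,b_i)\in S$, i.e.\ convexity of $f_q$. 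The only subtlety is handling the two families of allowed planes (ratio fixed on the $(c,d)$-coordinates versus on the $(a,b)$-coordinates), but $f$ is symmetric under swapping the player index — $f(a,b,c,d)=f(c,d,a,b)$, which one checks directly from the formula since $a^2+b^2+c^2+d^2$, $ac+bd$, and $abcd$ are all invariant — so convexity of $f_q$ for all $q\ge 0$ covers the second family as well.

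So the whole statement follows once we know $f_q$ is convex for every $q\ge 0$, which is exactly the hypothesis. Concretely, I would: (1) verify the symmetry $f(a,b,c,d)=f(c,d,a,b)$ to reduce to a single family of planes; (2) observe $\phi_q$ is affine and $S$ is convex, so $f_q$ is well-defined and convex on a convex domain under the hypothesis; (3) translate the convex-combination statement of (C1) through $\phi_q$ as above to conclude. The main obstacle is not in this lemma at all — it is essentially bookkeeping about affine reparametrizations — but rather in the deferred claim that $f_q$ really is convex for all $q\ge 0$, which will require analyzing the Hessian of $f_q$ as a function of $(a,b)$ with $q$ as a parameter and showing it is positive semidefinite; that computation (likely reducing to checking nonnegativity of the determinant and of a diagonal entry of a $2\times 2$ matrix whose entries are low-degree polynomials in $a,b,q$) is where the real work lies, and it is exactly what is left for after this lemma.
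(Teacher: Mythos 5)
Your proposal takes essentially the same route as the paper: parametrize an allowed plane affinely by $(a,b)\in S$ with the ratio $q$ fixed, observe $s=(1-f)/4$, and translate (C1) into convexity of $f_q$, using a symmetry of $f$ to cover the second family of allowed planes. Two boundary cases that you pass over silently are worth a word. First, writing $q=d'/c'$ presumes $c'>0$; when $c'=0$ but $d'>0$ you need the additional symmetry $f(a,b,c,d)=f(b,a,d,c)$ (also immediate from the formula) to reduce to the case $c'>0$, which is exactly what the paper does. Second, when $c=d=0$ the ``allowed plane'' of the first type degenerates to the segment $\{(t,1-t,0,0)\}$ and determines no ratio $q$ at all; the paper handles this case by a direct computation ($\tfrac{d^2}{dt^2}\,s(t,1-t,0,0)=-1$), while in your setup it suffices to remark that this segment is the edge $a+b=1$ of $S$, contained in the image of $\phi_q$ for every $q$, so convexity of any single $f_q$ covers it. With these two observations made explicit, your argument is complete and coincides with the paper's proof.
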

\begin{proof}
 Let $D = (a,b,c,d)$ be an arbitrary distribution and let $D_0=(a_0,b_0,c_0,d_0), D_1=(a_1,b_1,c_1,d_1)$ be such that $D = \lambda D_0 + (1-\lambda) D_1$ for some $\lambda \in [0,1]$ and $D_0,D_1$ lie in the same allowed plane through $D$.  By the symmetry $s(a,b,c,d)=s(c,d,a,b)$, we can without loss of generality assume that $D_0,D_1$ and $D$ are proportional on the entries of player~$2$ (i.e., a 1-allowed split). 

Assume first that $c=d=0$, then any $D_0$ and $D_1$ lie on the line $(t,1-t,0,0)$ with $t\in \R$. Define the restriction of $s$ to this line as $\tilde{s}(t) := s(t,1-t,0,0) = \frac{1}{4} \left(1-(1-t)^2-t^2\right)$ and note that $\tilde{s}(t)$ is concave by $\frac{d^2 \tilde{s}}{d t^2} = -1$. Thus (C1) is satisfied for $c=d=0$.

Hence we may assume $c > 0$ or $d>0$ and more specifically, by the symmetry $s(a,b,c,d)=s(b,a,d,c)$, that $c>0$. We set  $q:=  \frac{d}{c}$ and observe that $q = \frac{d_0}{c_0} = \frac{d_1}{c_1}$, since $(D_0,D_1)$ is a 1-allowed split of $D$. Then by $\|D\|_1 = \|D_i\|_1 = 1$, we have $D=(a,b,\frac{1}{1+q}(1-a-b),\frac{q}{1+q}(1-a-b))$ and $D_i = (a_i,b_i, \frac{1}{1+q}(1-a_i-b_i), \frac{q}{1+q}(1-a_i-b_i))$ (for $i\in \{0,1\}$). Recall that $D = \lambda D_0 + (1-\lambda) D_1$. The claim now follows from the simple calculation
\begin{align*}
s(D) & = \frac{1-f(a,b,\frac{1}{1+q}(1-a-b),\frac{q}{1+q}(1-a-b))}{4}  = \frac{1-f_q(a,b)}{4}  \\
& \ge \lambda \frac{1- f_q(a_0,b_0)}{4} + (1-\lambda)\frac{1-f_q(a_1,b_1)}{4} \qquad \text{[by convexity of } f_q\text{]}\\
& = \lambda s(D_0) + (1-\lambda) s(D_1).
\end{align*}
\end{proof}

It remains to analyze the concavity of $f_q$.

 \begin{lemma}
\label{lem:fqConcave}
For all $q\ge 0$, $f_q$ is concave on $S$.
\end{lemma}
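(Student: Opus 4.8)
The printed statement should say ``$f_q$ is \emph{convex} on $S$'': this is the property actually invoked in the proof of Lemma~\ref{lem:ifconcave}, and $f_q$ is genuinely not concave (e.g.\ $f_1(\tfrac14,\tfrac14)=-\tfrac{15}{32}$, which lies strictly below the chord value $\tfrac12=f_1(0,0)=f_1(\tfrac12,\tfrac12)$ at the midpoint). So the plan is to show that the Hessian of $f_q$ is positive semidefinite on $S$. Write $\alpha:=\frac1{1+q}$, $\beta:=\frac q{1+q}$, $u:=1-a-b$, so that $f_q=f\circ\phi$ with $\phi(a,b):=(a,b,\alpha u,\beta u)$ an affine map. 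Then the Hessian of $f_q$ equals $J^{T}(\Hess f)J$ with $J$ the constant Jacobian of $\phi$, so $f_q$ is convex on the convex set $S$ iff at every point $\phi(a,b)$, $(a,b)\in S$, the Hessian of $f$ is positive semidefinite on the plane spanned by the columns $w_1:=(1,-1,0,0)$ and $w_2:=(1,1,-2\alpha,-2\beta)$ of $J$; equivalently, the $2\times2$ matrix with entries $A:=w_1^{T}(\Hess f)w_1$, $B:=w_1^{T}(\Hess f)w_2$, $C:=w_2^{T}(\Hess f)w_2$ is positive semidefinite.

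Using $\partial_{xx}f=2$ for each coordinate, $\partial_{ac}f=-6+8bd$, $\partial_{bd}f=-6+8ac$, and $\partial_{xy}f=8\cdot(\text{product of the two remaining variables})$ for the other pairs, a direct substitution ($c=\alpha u$, $d=\beta u$, $\alpha+\beta=1$, $4ab=(1-u)^{2}-e^{2}$ with $e:=a-b$, $\rho:=\alpha\beta\in[0,\tfrac14]$, $\delta:=\alpha-\beta$ so $\delta^{2}=1-4\rho$) gives
\[
A = 4-16\rho u^{2},\qquad B = 12\delta+32\rho ue,\qquad C = 36+16\rho\bigl(6u^{2}-6u-e^{2}\bigr).
\]
On $S$ we have $u\in[0,1]$ and $|e|\le 1-u$, so $\rho\le\tfrac14$ immediately gives $A\ge0$ and $C\ge0$. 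Hence the problem reduces to the single scalar inequality $AC-B^{2}\ge0$ on $\{u\in[0,1],\ |e|\le 1-u\}$, uniformly over admissible $\rho$.

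The key point is that, after expansion, the coefficient of $e^{2}$ in $AC-B^{2}$ is $-64\rho-768\rho^{2}u^{2}\le0$, so $AC-B^{2}$ is a downward parabola in $e$; on $|e|\le 1-u$ it is therefore minimized at an endpoint, and since $(AC-B^{2})|_{e=1-u}-(AC-B^{2})|_{e=-(1-u)}=-1536\,\delta\rho u(1-u)$ the minimizing endpoint is $e=(1-u)\operatorname{sgn}\delta$. Plugging this in and factoring out $64\rho(1-u)\ge0$, the inequality $AC-B^{2}\ge0$ becomes
\[
8+4u-12|\delta|\,u+12\rho u^{2}(3u-1)\ \ge\ 0\qquad(u\in[0,1]).
\]
Since $|\delta|\le1$, the first three terms are at least $8-8u\ge0$; if $u\ge\tfrac13$ the last term is nonnegative, and if $u<\tfrac13$ then $\rho\le\tfrac14$ bounds the last term below by $3u^{2}(3u-1)$, whose minimum over $[0,\tfrac13]$ is $-\tfrac4{81}$, easily absorbed by $8-8u\ge\tfrac{16}{3}$. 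This settles $AC-B^{2}\ge0$, hence convexity of $f_q$ on $S$.

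The only real obstacle is bookkeeping: assembling $A,B,C$ from the Hessian of $f$ correctly, and then pushing the expansion of $AC-B^{2}$ far enough to expose both the sign of its $e^{2}$-coefficient and the clean factor $64\rho(1-u)$; after that the residual one-variable estimate is routine. A more mechanical alternative is to compute the $2\times2$ Hessian of $f_q$ directly in the coordinates $(a,b)$ and check that its trace and determinant are nonnegative on $S$, but routing through $w_1,w_2$ keeps $A=4-16\rho u^{2}$ — and hence the first positivity check — especially transparent.
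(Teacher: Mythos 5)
Your proposal is correct, and you are also right that the lemma statement contains a typo: the proof of Lemma~\ref{lem:ifconcave} needs \emph{convexity} of $f_q$, the paper's own proof of Lemma~\ref{lem:fqConcave} in fact establishes convexity (it shows $\Hess(f_q)$ is positive semidefinite), and your counterexample $f_1(\tfrac14,\tfrac14)=-\tfrac{15}{32}<\tfrac12$ confirms $f_q$ is not concave. Your overall method coincides with the paper's --- both verify positive semidefiniteness of the $2\times2$ Hessian of $f_q$ via nonnegativity of the diagonal entries and the determinant --- but the algebraic reduction is genuinely different. The paper works directly in the coordinates $(a,b)$ and the parameter $q$, writes the determinant as $\frac{64q}{(1+q)^4}\bigl(p_2(a,b)+p_3(a,b)q+p_4(a,b)q^2\bigr)$, and proves each coefficient polynomial nonnegative on $S$ (the quartic $p_3\ge0$ being the laborious step). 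You instead pass to the basis $w_1=(1,-1,0,0)$, $w_2=(1,1,-2\alpha,-2\beta)$ (equivalently, coordinates $e=a-b$, $u=1-a-b$ and parameters $\rho=\alpha\beta$, $\delta=\alpha-\beta$), which collapses the entries to the compact forms $A=4-16\rho u^2$, $B=12\delta+32\rho ue$, $C=36+16\rho(6u^2-6u-e^2)$; I have checked these against the Hessian of $f$ and they are correct, as is the factorization $AC-B^2=64\rho(1-u)\bigl(8+4u-12|\delta|u+12\rho u^2(3u-1)\bigr)$ after substituting the minimizing endpoint $e=(1-u)\,\mathrm{sgn}\,\delta$ of the downward parabola in $e$. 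This buys a reduction of the determinant condition to a one-variable estimate that is checked in two lines, at the price of the coordinate bookkeeping; the paper's route requires no change of variables but leaves a harder polynomial inequality. One cosmetic slip: $w_1,w_2$ are not the columns of the Jacobian $J$ of $\phi$ (those are $(1,0,-\alpha,-\beta)$ and $(0,1,-\alpha,-\beta)$) but an invertible recombination of them; since positive semidefiniteness of $P^{T}\Hess(f_q)P$ for invertible $P$ is equivalent to that of $\Hess(f_q)$, your criterion is nonetheless exactly right, as you in effect state when you phrase it as positive semidefiniteness of $\Hess(f)$ restricted to the plane spanned by $w_1,w_2$.
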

\begin{proof}
We use the fact that $f_q$ has continuous second partial derivatives and thus is convex if the Hessian $H:=\Hess(f_q)$ is positive semidefinite. By straight-forward calculations, we obtain
\[
\Hess(f_q) =
\left(
\begin{array}{cc}
 \frac{4 \left(q^2+4 \left(2 b^2+(3 a-2) b+1\right) q+4\right)}{(q+1)^2} & \frac{4 \left(2 q^2+\left(6 a^2+8 (2 b-1) a+6 b^2-8 b+5\right) q+2\right)}{(q+1)^2} \\
 \frac{4 \left(2 q^2+\left(6 a^2+8 (2 b-1) a+6 b^2-8 b+5\right) q+2\right)}{(q+1)^2} & \frac{4 \left(4 q^2+4 \left(2 a^2+(3 b-2) a+1\right) q+1\right)}{(q+1)^2} \\
\end{array}
\right).
\]
To verify positive semidefiniteness, we exploit the criterion that $\Hess(f_q)$ is positive semidefinite if its principal minors are non-negative, i.e.,
\begin{eqnarray}
\frac{4 \left(4+4 \left(1+(-2+3 a) b+2 b^2\right) q+q^2\right)}{(1+q)^2} & \ge & 0, \qquad \text{ and } \label{eq:H1}\\
\Det[H] & \ge & 0. \label{eq:H2}
\end{eqnarray}
Verifying~\eqref{eq:H1} is equivalent to showing that $4+4 p_1(a,b) q+q^2\ge 0$, where $p_1(a,b):= 1+(-2+3 a) b +2 b^2$. Note that by $0\le a,b \le 1$, we have $p_1(a,b) \ge 1-2b + 2b^2 \ge 1- 2b \ge -1$. Thus,  
\[ 4+4 p_1(a,b) q+q^2\ge 4 - 4q  +q^2 = (q-2)^2 \ge 0,\]
proving~\eqref{eq:H1}.

Regarding~\eqref{eq:H2}, straight-forward calculations reveal that
\[ \Det[H] =  \frac{64q}{(1+q)^4} \left(p_2(a,b) + p_3(a,b)q + p_4(a,b)q^2\right), \]
where
\begin{eqnarray*}
p_2(a,b) & = & 2 a^2+6 b-a b-4 b^2, \\
p_3(a,b) & = & 12 (a+b) - 23 (a^2 + b^2) - 32 ab + 24 (a^3 (1-b) + b^3 (1-a))  \\
& & + 48 (ab^2 + a^2 b)  - 30 a^2 b^2 - 9 (a^4 + b^4),\\ 
p_4(a,b) & = & 6 a-4 a^2-a b+2 b^2.
\end{eqnarray*}

\begin{claim}
For all $0\le a,b \le 1$ with $a+b\le 1$ we have 
\[p_2(a,b),p_3(a,b),p_4(a,b)\ge 0.\]
\end{claim}
\begin{proof}
Note that by $a,b\le 1$, we have $p_2(a,b) = 2a^2 + 6b - ab - 4b^2 \ge 2a^2 + 6b - b - 4b = 2a^2 + b\ge 0$. Since $p_4(a,b) = p_2(b,a)$, it directly follows that  $p_4(a,b) \ge 0$. 

To prove the remaining statement $p_3(a,b) \ge 0$, we will exploit the following basic inequalities
\begin{eqnarray}
a^4 + b^4 & \le & a^3 (1-b) + b^3 (1-a), \label{eq:simplepot4} \\
a^2 b^2 & \le & \frac{1}{2}(ab^2 + a^2 b  - a b^3 - a^3 b),\label{eq:simplepot2prod}
\end{eqnarray}
which directly follow from plugging in $a\le 1-b$ and $b\le 1-a$ into the left-hand sides. We compute
\begin{align*}
p_3(a,b) & = & & 12 (a+b) - 23 (a^2 + b^2) - 32 ab + 24 (a^3 (1-b) + b^3 (1-a)) & &  \\
& & & + 48 (ab^2 + a^2 b) - 30 a^2 b^2  - 9 (a^4 + b^4), & & \\ 
& \ge & & 12 (a+b) - 23 (a^2 + b^2) - 32 ab + 15 (a^3 (1-b) + b^3 (1-a)) & &  \\
& & & + 48 (ab^2 + a^2 b) - 30 a^2 b^2 & & \text{[by \eqref{eq:simplepot4}]} \\
& \ge & & 12 (a+b) - 23 (a^2 + b^2) - 32 ab + 15 (a^3 + b^3) & & \\
& & & + 33 (ab^2 + a^2 b) & & \text{[by \eqref{eq:simplepot2prod}]}\\
& = & & 12 (a+b) - 23 (a+b)^2 + 14 ab + 4 (a^3 + b^3) & & \\
& & & + 11 (a+b)^3 & & \\
& \ge & & 12 (a+b) - 23 (a+b)^2 + 11 (a+b)^3,  
\end{align*}
Basic calculus shows that $t: [0,1]\to \R, s\mapsto 12 s - 23 s^2 + 11s^3$ has global minima $t(0) = t(1) = 0$ and thus $p_3(a,b) \ge t(a+b) \ge 0$.
\end{proof}

Thus $p_2(a,b) + p_3(a,b)q + p_4(a,b)q^2 \ge 0$ follows from $q\ge 0$, which implies that $\Det[H]\ge 0$.
Hence, we have verified~\eqref{eq:H1} and \eqref{eq:H2}, which proves that $f_q$ is convex for all $q\ge 0$. 
\end{proof}

\begin{proof}[Proof of Lemma~\ref{lem:sIsConcave}]
Combining Lemmas~\ref{lem:ifconcave} and~\ref{lem:fqConcave} yields that $s$ is concave on all allowed planes.
\end{proof}

\section{Conclusion}

Despite the fundamental understanding of the cryptogenography problem obtained by Brody et al.~\cite{BrodyJSW14}, determining the success probability even of the 2-player case remains an intriguing open problem. The previous best protocol with success probability $1/3$, while surprising and unexpected at first, is natural and very symmetric (in particular when viewed in the convex combination or vector splitting game formulation). We disprove the hope that it is an optimal protocol by exhibiting less intuitive and less symmetric protocols having success probabilities up to $0.3384$. Concerning hardness results, our upper bound of $0.3671875$ shows that also the previous upper bound of $3/8$ was not the final answer. These findings add to the impression that the cryptography problem offers a more complex nature than its simple description might suggest and that understanding the structure of good protocols is highly non-trivial. 

We are optimistic that our methods support a further development of improved protocols and bounds. (1) Trivially, investing more computational power or optimizing the automated search might lead to finding better protocols. (2) Our improved protocols might motivate to (manually) find infinite protocol families exploiting implicit properties and structure of these protocols. (3) Our reformulations, e.g., as vector splitting game, might ease further searches for better protocols and for better candidate functions for a hardness proof.

\bibliography{cryptogeno}

\end{document}